\newtheorem{thm}{Theorem}[section]
\newtheorem{lem}[thm]{Lemma}
\newtheorem{cor}[thm]{Corollary}
\newtheorem{prop}[thm]{Proposition}
\theoremstyle{definition}
\theoremstyle{definition}
\theoremstyle{definition}
\newtheorem{remark}[thm]{Remark}
\newcommand{\mc}[1]{\mathcal{#1}}
\newcommand{\e}[1]{\emph{#1}}
\newcommand{\la}{\langle}
\newcommand{\ra}{\rangle}
\newcommand{\tr}{\mathrm{tr}}
\newcommand{\rmv}[1]{}
\newcommand{\hs}{\hskip5pt}
\newcommand{\field}[1]{\mathbb{#1}}
\newcommand{\C}{{\field{C}}}
\newcommand{\id}{{\iota}}
\newcommand{\om}{{\omega}}
\newcommand{\bd}{{\mathcal{B}}}
\newcommand{\N}{\mathcal{N}}
\newcommand{\M}{{\mathfrak{M}}}
\newcommand{\B}{{\mathcal{B}}}
\newcommand{\fee}{{\varphi}}
\newcommand{\LL}{{L^{\infty}(\G)}}
\newcommand{\Ll}{{\ell^{\infty}(\G)}}
\newcommand{\Lo}{{\ell^{1}(\G)}}
\newcommand{\LLH}{{L^{\infty}(\hat{\G})}}
\newcommand{\LO}{L^{1}(\G)}
\newcommand{\LT}{L^{2}(\G)}
\newcommand{\Lp}{{\mathcal{L}^p(\G)}}
\newcommand{\LLL}{L^{\infty}(\hat\G)}
\newcommand{\R}{{\field{R}}}
\newcommand{\abs}[1]{|#1|}
\newcommand{\norm}[1]{\|#1\|}
\newcommand{\G}{\mathbb G}
\newcommand{\RG}{\mc{R}(G)}
\newcommand{\Nphi}{\mc{N}_\varphi}
\newcommand{\Mphi}{\mc{M}_\varphi}
\newcommand{\vphi}{\varphi}
\newcommand{\lm}{\lambda}
\newcommand{\Lm}{\Lambda}
\newcommand{\Gam}{\Gamma}
\newcommand{\ten}{\otimes}
\newcommand{\h}[1]{\hat{#1}}
\providecommand{\abs}[1]{\lvert#1\rvert}
\providecommand{\norm}[1]{\lVert#1\rVert}
\providecommand{\bignorm}[1]{\bigg\lVert#1\bigg\rVert}
\newcommand{\F}{\mc{F}}
\begin{document}

\title{An uncertainty principle for unimodular quantum groups}
\author{Jason Crann$^{1,2}$ and Mehrdad Kalantar$^{1}$}
\email{jason\_crann@carleton.ca, mkalanta@math.carleton.ca}
\address{$^1$School of Mathematics \& Statistics, Carleton University, Ottawa, ON, Canada K1S 5B6}
\address{$^2$Universit\'{e} Lille 1 - Sciences et Technologies, UFR de Math\'{e}matiques, Laboratoire de Math\'{e}matiques Paul Painlev\'{e}
- UMR CNRS 8524, 59655 Villeneuve d'Ascq C\'{e}dex, France}

\begin{abstract} We present a generalization of Hirschman's entropic uncertainty principle for locally compact abelian groups to unimodular
locally compact quantum groups. As a corollary, we strengthen a well-known uncertainty principle for compact groups, and generalize the relation to compact
quantum groups of Kac type. We also establish the complementarity of finite-dimensional quantum group algebras. In the non-unimodular setting,
we obtain an uncertainty relation for arbitrary locally compact groups using the relative entropy with respect to the Haar weight as the measure of uncertainty.
We also show that when restricted to $q$-traces of discrete quantum groups, the relative entropy with respect to the Haar weight reduces to the
canonical entropy of the random walk generated by the state.\end{abstract}

\maketitle

\begin{spacing}{1.3}

\section{Introduction}

Heisenberg's celebrated uncertainty principle \let\thefootnote\relax\footnotetext{2010 \e{Mathematics Subject Classification} Primary: 46L89, 81R15; Secondary: 22D25, 81R05. The first author was supported by an NSERC Canada Graduate Scholarship.}asserts the mutual incompatibility of measurements of position and momentum on $L^2(\R)$, in the
sense that the product of their uncertainties in any state is bounded below by some universal constant \cite{Hei}. This was later quantified by Kennard \cite{K},
who showed that
\begin{equation}\label{Kenn}\sigma(Q,f)\sigma(P,f)\geq\frac{\hbar}{2}\end{equation}
for any $\norm{f}_2=1$, where $\sigma(Q,f)$ is the standard deviation of a measurement of $Q$ in the state $f$. As $Q$ and $P$
are unitarily equivalent via the Fourier transform, one may interpret this uncertainty principle as a statement about the complementarity of a function
$f\in L^2(\R)$ and its Fourier transform $\hat{f}\in L^2(\R)$. Indeed, it was shown by Hirschman \cite{Hir} that
\begin{equation}\label{Hirschmann}H(|f|^2)+H(|\hat{f}|^2)\geq0\end{equation}
for all $\norm{f}_2=1$, where $H(|f|^2)$ is the entropy of the density $|f|^2\in L^1(\R)$. This was later sharpened by Beckner \cite{Bec} to
\begin{equation*}H(|f|^2)+H(|\hat{f}|^2)\geq\log(\pi e).\end{equation*}
Under the convention that $\hbar\equiv1$, this latter inequality implies (\ref{Kenn}),
suggesting that entropy may be more suitable for measuring the complementarity of $f$ and $\hat{f}$. Moreover, Hirschman remarks that a similar argument as in \cite{Hir} yields inequality (\ref{Hirschmann}) for arbitrary locally compact abelian groups.


With non-abelian group duality fully established, along with the entropy theory of normal states on von Neumann algebras, a natural question
is to seek a manifestation of Hirschman's entropic uncertainty principle in this more general setting.
In this paper, we present such a generalization to the level of unimodular locally compact quantum groups.
As a corollary, we strengthen a well-known uncertainty principle for compact groups, and generalize the relation to compact
quantum groups of Kac type. We also show that the algebras $\LL$ and $\LLH$ associated to a finite-dimensional quantum group
$\G$ and its dual $\hat{\G}$ are complementary in the sense of Petz \cite{Petz}, and satisfy a non-commutative analog of the well-known
uncertainty relation for mutually unbiased bases.

Towards the non-unimodular generalization, in the final section we establish an entropic uncertainty principle for arbitrary locally compact groups
by using the relative entropy with respect to the Haar weight as the measure of uncertainty. As a side result, we also show that when restricted to $q$-traces of
discrete quantum groups, the relative entropy with respect to the Haar weight reduces to the
canonical entropy of the random walk generated by the state (cf. \cite[\S2]{HI}).

We begin with a brief overview of the relevant tools from locally compact quantum groups. For more details on the subject we refer the reader to \cite{KV}.

A \emph{locally compact quantum group} $\G$ is a
quadruple $(M, \Gamma, \varphi, \psi)$, where $M$ is a
von Neumann algebra with a co-associative co-multiplication
$\Gamma: M\to M \bar\otimes M$, and $\varphi$
and  $\psi$ are  (normal faithful semi-finite) left and right
Haar weights on $M$, respectively.
We write $\Mphi^+ = \{x\in M^+ : \fee(x)<\infty\}$ and
$\Nphi = \{x\in M^+ : \fee(x^*x)<\infty\}$, and we denote
by $\Lambda_\fee$ the inclusion of $\Nphi$ into the GNS Hilbert
space $H_\fee$ of $\fee$.
For each locally compact
quantum group $\G$, there exist a  \emph{left fundamental unitary
operator}  $W$ on $H_\fee\otimes H_\fee$
which satisfies  the  pentagonal relation
and such that the co-multiplication $\Gamma$ on $M$ can be expressed as
\begin{equation*}
\Gamma(x) = W^{*}(1\otimes x)W
\quad(x \in M).
\end{equation*}

Let $M_*$ be the predual of $M$.
The \emph{left regular representation} $\lambda : M_* \to
\B(H_\fee)$ is defined by
 \[
\lambda : M_*\ni f   \,\longmapsto\, \lambda(f) = (f\otimes \iota)(W)
\in \B(H_\fee),
 \]
which is an injective map 
from $M_*$ into $\B(H_\fee)$.
Then  $\hat M={\{\lambda(f): f\in M_*\}}''$
is the von Neumann algebra associated with the dual quantum group
$\hat \G$.  It follows that $W \in M \bar \otimes \hat M$.
We also define the completely contractive injection
\[
\hat\lambda:  {\hat M}_*\ni\hat f \,\longmapsto\, \hat\lambda(\hat f) =
(\iota \otimes \hat f)(W)\in M.
\]

If $G$ is a locally compact group, then $\G_a=( L^{\infty}(G),\Gam_a,\vphi_a,\psi_a)$ becomes a \e{commutative} quantum group associated with the
commutative von Neumann algebra $ L^{\infty}(G)$, where the co-multiplication is given by $\Gam_a(f)(s,t)=f(st)$, and $\vphi_a$ and $\psi_a$ are integration
with respect to a left and right Haar measure, respectively. The dual quantum group $\h{\G}_a$ of $\G_a$ is given by
$\G_s=(VN(G),\Gam_s,\vphi_s,\psi_s)$, where $VN(G)=\{\lm(g)\mid g\in G\}''$ is the von Neumann algebra generated by the left regular representation of $G$, the
co-multiplication is $\Gam_s(\lm(g))=\lm(g)\ten\lm(g)$, and $\vphi_s=\psi_s$ is Haagerup's Plancherel weight (cf. \cite[\S VII.3]{Tak2}). The duality of
$\G_a$ and $\G_s$ may be seen as a non-abelian generalization of Pontrjagin--van Kampen duality. Indeed, when $G$ is a locally compact abelian group then
$VN(G)\cong L^{\infty}(\hat{G})$, where $\hat{G}$ is the dual group of $G$, i.e., the locally compact abelian group of continuous
characters $\chi:G\rightarrow\mathbb{T}$.

Let $\G$ be a locally compact quantum group such that the left Haar weight $\fee$ on $\G$ is a trace.
For $1\leq p < \infty$, we denote by $L^p(\G)$ the noncommutative $ L^p$-space associated to $\fee$;
this space is obtained by taking the closure
of the span of $\mc{M}_\fee^+$ with the norm $\|x\|_p := \fee(|x|^p)^{\frac1p}$ (see \cite[\S IX.2]{Tak2} for details).
We denote by $\LL$ the von Neumann algebra $M$. Unless otherwise stated, we canonically identify $\LL$ as a von Neumman subalgebra of $\bd(\LT)$ via left multiplication. 
The map
\begin{equation}\label{je1}
\Mphi\ni x\,\longmapsto\, \fee_x \in M_*
\end{equation}
extends to an isometric isomorphism between $\LO$ and $M_*$,
where $\langle\fee_x,y\rangle = \fee(xy)$. We say that $\G$ is {\it unimodular} if $\fee = \psi$ is tracial. In this case $\hat \G$ is unimodular too.


For a locally compact quantum group $\G$ with tracial left Haar weight $\fee$, and $x\in \LO^+$ with $\|x\|_1 = 1$, we define the {\it entropy} of $x$ by
\[
H(x,\vphi):=-\fee(x \log x)=-\int_0^\infty\lambda\log\lambda d\fee(e_\lambda)
\]
where $\{e_\lambda\}$ are the spectral projections of $x$. For example, if $G$ is a locally compact group with left Haar measure $\mu_G$, and $f\in L^1(G)^+$
with $\norm{f}_1=1$, then
\begin{equation*}H(f,\mu_G)=-\int_Gf(s)\log(f(s))d\mu_G(s),\end{equation*}
the classical entropy of the probability density $f$. For a state $\rho\in\mc{T}(\LT):=\bd(\LT)_*$, we denote the von Neumann entropy of $\rho$ by $H(\rho,\tr)$.

\section{The Uncertainty Principle}

Inspired by a recent argument of Frank and Lieb \cite{FL}, we will use the following two well-known inequalities from
quantum statistical mechanics, this first of which follows from Klein's inequality.

\begin{lem}\label{Gibbs}\textnormal{[Gibbs Variational Principle]} Let $A$ be a self-adjoint operator on a Hilbert space $H$ such that $\tr(e^{-A})<\infty$. Then for any positive $\rho\in\mc{T}(H)$ with $\tr(\rho)=1$, we have
\begin{equation*}\tr(\rho A)+\tr(\rho\log\rho)\geq-\log\tr(e^{-A})\end{equation*}
with equality if and only if $\rho=e^{-A}/\tr(e^{-A})$.\end{lem}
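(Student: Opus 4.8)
The plan is to subtract the constant $-\log\tr(e^{-A})$ from the left-hand side, recognize the result as the relative entropy of $\rho$ with respect to the \emph{Gibbs state} $\sigma:=e^{-A}/\tr(e^{-A})$, and then appeal to the non-negativity of relative entropy, which is exactly Klein's inequality for the strictly convex function $t\mapsto t\log t$; the equality case will come out of strict convexity together with faithfulness of $\tr$.

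Concretely, I would set $Z:=\tr(e^{-A})\in(0,\infty)$ and $\sigma:=Z^{-1}e^{-A}$, which is a positive trace-class operator with $\tr(\sigma)=1$. Since $A$ is self-adjoint with $\tr(e^{-A})<\infty$, it has discrete spectrum bounded below with eigenvalues tending to $+\infty$, so $e^{-A}$ — and hence $\sigma$ — is injective and $-\log\sigma=A+(\log Z)1$ on the natural dense domain. Thus, formally,
\begin{equation*}
\tr(\rho A)+\tr(\rho\log\rho)+\log Z=\tr(\rho\log\rho)-\tr(\rho\log\sigma)=:S(\rho\,\|\,\sigma),
\end{equation*}
the relative entropy of $\rho$ with respect to $\sigma$. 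I would then invoke Klein's inequality in the form $\tr\big(f(X)-f(Y)-(X-Y)f'(Y)\big)\geq0$, valid for convex $f$ and self-adjoint $X,Y$ with spectra in the domain of $f$, with equality iff $X=Y$ when $f$ is strictly convex. Taking $f(t)=t\log t$, $f'(t)=\log t+1$, $X=\rho$, $Y=\sigma$, and using $\tr(\rho)=\tr(\sigma)=1$, every term cancels except $\tr(\rho\log\rho)-\tr(\rho\log\sigma)$, so $S(\rho\,\|\,\sigma)\geq0$ with equality exactly when $\rho=\sigma$. Substituting $-\log\sigma=A+(\log Z)1$ and rearranging then gives $\tr(\rho A)+\tr(\rho\log\rho)\geq-\log Z$, with equality iff $\rho=e^{-A}/\tr(e^{-A})$.

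The step I expect to be the main obstacle is justifying these manipulations when $H$ is infinite-dimensional and $A$ unbounded, since $\tr(\rho A)$ and $-\tr(\rho\log\rho)$ may separately be $+\infty$ and one must rule out an illegitimate $\infty-\infty$. The standard device is truncation: letting $P_n$ be the spectral projection of $A$ onto its $n$ lowest eigenvalues, I would apply the finite-dimensional (or bounded) form of Klein's inequality to the compressions $P_nAP_n$ and the suitably renormalized $P_n\rho P_n$, and pass to the limit, using that $t\log t$ is bounded below on $[0,1]$, that $x\mapsto-\tr(x\log x)$ is normal, and monotone convergence for $\tr(\rho A)$; equivalently one defines $S(\rho\,\|\,\sigma)\in[0,+\infty]$ via the manifestly non-negative spectral double sum and checks that it agrees with the displayed expression, so that the degenerate case $\tr(\rho A)=+\infty$ is also covered. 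Since the non-negativity of Umegaki's relative entropy with respect to a faithful state is entirely standard, one could alternatively just cite it, the equality statement then following from strict convexity of $t\log t$ and faithfulness of $\tr$.
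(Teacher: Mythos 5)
The paper does not prove this lemma; it only records that it ``follows from Klein's inequality,'' which is precisely the route you take: identifying $\tr(\rho A)+\tr(\rho\log\rho)+\log\tr(e^{-A})$ with the relative entropy $S(\rho\,\|\,e^{-A}/\tr(e^{-A}))$ and invoking Klein's inequality for $t\mapsto t\log t$, with the equality case from strict convexity. Your argument, including the truncation device to handle the unbounded/infinite-dimensional case, is correct and matches the paper's intended (standard) justification.
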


\begin{lem}\cite[Theorem 4]{Ruskai}\label{GT}\textnormal{[Golden--Thompson Inequality]} Let $A$ and $B$ be self-adjoint operators bounded from above, then
\begin{equation*}\tr(e^{A+B})\leq\tr(e^{A/2}e^{B}e^{A/2}).\end{equation*}\end{lem}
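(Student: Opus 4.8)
The plan is to deduce the Golden--Thompson inequality from the Lie--Trotter product formula together with an elementary trace-power inequality for positive operators. First note that $\tr(e^{A/2}e^Be^{A/2})=\tr(e^Ae^B)$, where the right-hand side denotes the standard symmetric quantity $\tr(XY):=\tr(X^{1/2}YX^{1/2})=\tr(Y^{1/2}XY^{1/2})\in[0,\infty]$ attached to positive operators $X,Y$; hence it suffices to show $\tr(e^{A+B})\leq\tr(e^Ae^B)$. By the Lie product formula one has $(e^{A/2^{n+1}}e^{B/2^n}e^{A/2^{n+1}})^{2^n}\to e^{A+B}$ (in operator norm when $A,B$ are bounded, and in the strong/resolvent sense in general, $A+B$ being taken as the form sum), so by normality and lower semicontinuity of the trace $\tr(e^{A+B})\leq\liminf_n\tr\big((e^{A/2^{n+1}}e^{B/2^n}e^{A/2^{n+1}})^{2^n}\big)$. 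Using the identity $\tr\big((PQ)^k\big)=\tr\big((P^{1/2}QP^{1/2})^k\big)$ with $P=e^{A/2^n}$, $Q=e^{B/2^n}$, the right-hand side equals $\liminf_n\tr\big((e^{A/2^n}e^{B/2^n})^{2^n}\big)$, so the inequality will follow once one proves the estimate
\begin{equation*}\tr\big((PQ)^{2^n}\big)\leq\tr\big(P^{2^n}Q^{2^n}\big)\qquad(P,Q\geq0,\ n\geq0),\end{equation*}
applied with $P=e^{A/2^n}$ and $Q=e^{B/2^n}$.

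I would prove this estimate by iterating the single-step inequality $\tr\big((XY)^{2m}\big)\leq\tr\big((X^2Y^2)^m\big)$, valid for $X,Y\geq0$ and $m\geq1$: starting from $\tr((PQ)^{2^n})=\tr((PQ)^{2\cdot2^{n-1}})$ one gets $\tr((PQ)^{2^n})\leq\tr((P^2Q^2)^{2^{n-1}})\leq\tr((P^4Q^4)^{2^{n-2}})\leq\cdots\leq\tr(P^{2^n}Q^{2^n})$. For the single step, observe that the non-zero eigenvalues of $(XY)^2$ are the squares of the eigenvalues $\lambda_i\geq0$ of $XY$ (which coincide with those of $X^{1/2}YX^{1/2}\geq0$), while the eigenvalues of $X^2Y^2$ are the squares of the singular values $s_i$ of $XY$ (the eigenvalues of $(XY)^*(XY)$). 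By Weyl's majorant theorem the sequence $(\lambda_i)$ is log-majorized by $(s_i)$, with equality of the full products; since $t\mapsto e^{2mt}$ is convex (and increasing), this yields $\sum_i\lambda_i^{2m}\leq\sum_i s_i^{2m}$, i.e. $\tr((XY)^{2m})\leq\tr((X^2Y^2)^m)$. The base case $m=1$ may also be checked by hand: $\tr((XY)^2)=\|X^{1/2}YX^{1/2}\|_2^2\geq0$, while $\tr((XY)^2)\leq\|(XY)^2\|_1\leq\|XY\|_2^2=\tr(X^2Y^2)$ by the Cauchy--Schwarz inequality $\|AB\|_1\leq\|A\|_2\|B\|_2$ in Hilbert--Schmidt norm.

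The main obstacle is not this algebraic core but its transcription to the semifinite von Neumann algebra setting of \cite{Ruskai}: one must replace the matrix eigenvalue/singular value calculus by the theory of $\tau$-measurable operators and generalized singular numbers, establish the semifinite analogue of Weyl's majorant theorem, and justify the Lie--Trotter limit together with the required semicontinuity of $\tau$. Moreover, since $A$ and $B$ are only assumed bounded above, one must make sense of $A+B$ as a form sum and control $e^{A+B}$ by truncating $A$ and $B$ from below to bounded operators, applying the bounded case, and passing to the limit by monotone convergence. It is precisely these functional-analytic points that make it preferable to invoke the inequality from \cite{Ruskai} rather than reproduce its proof here.
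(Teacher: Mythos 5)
The paper does not actually prove this lemma: it is quoted verbatim from Ruskai's Theorem~4, so there is no internal proof to compare against. What you have written is the classical Golden--Thompson argument --- Lie--Trotter plus the iterated trace-power estimate $\tr((XY)^{2m})\le\tr((X^2Y^2)^m)$ obtained from Weyl's log-majorization of eigenvalues by singular values --- and it is correct in the setting in which the paper actually invokes the lemma, namely the canonical trace on $\mathcal{B}(L^2(\mathbb{G}))$ with $e^{A},e^{B}$ bounded positive operators ($A=\log D$, $B=\mathcal{F}^*\log\hat{D}\mathcal{F}$ after truncation). Two small remarks. First, the ``equality of the full products'' in Weyl's theorem is neither needed nor meaningful for infinite-rank compact operators; the majorant theorem $\sum\Phi(\lambda_i)\le\sum\Phi(s_i)$ for $\Phi(t)=t^{2m}$ already suffices, though one should note that Lidskii's theorem is being used to identify $\tr((XY)^{2m})$ with $\sum_i\lambda_i^{2m}$. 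Second, since the operators in the application take the ``value $-\infty$'' on the kernels of $D$ and $\hat{D}$, the Trotter--Kato limit for form sums produces $e^{t(A\dot{+}B)}P$ with $P$ the projection onto the closure of the intersection of the form domains; this only decreases the left-hand side, so it does not threaten the inequality, but it is one of the points (along with lower semicontinuity of the trace under weak convergence and the monotone truncation from below) that you rightly identify as the real content of Ruskai's semifinite version. Given that the paper itself delegates all of this to the citation, your outline plus the explicit deferral is a reasonable, and in fact more informative, treatment than the paper's.
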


\begin{lem}\label{Kraus} Let $\G$ be a unimodular locally compact quantum group and let $\om\in M_*^+$. Then there exists a net $(\hat{w}_k)$ in $\hat{M}$ satisfying $\sum_{k\in K}\hat{w}_k^*\hat{w}_k=\sum_{k\in K}\hat{w}_k\hat{w}_k^*=\om(1)1$, and
\begin{equation*}\Theta(\om)(T):=(\om\ten\id)W^*(1\ten T)W=\sum_{k\in K}\hat{w}_k^*T\hat{w}_k,\hs\hs T\in\bd(\LT),\end{equation*}
where all sums converge in the weak* topology of $\bd( L^2(\G))$.\end{lem}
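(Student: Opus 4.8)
The plan is to realize the completely positive, normal, unital-up-to-scaling map $T \mapsto (\om \otimes \id)W^*(1 \otimes T)W$ as a map built from the fundamental unitary and then apply a Kraus-type / Stinespring decomposition, using unimodularity to control the Kraus operators. First I would observe that since $W \in M \bar\otimes \hat M$, the map $\Gamma^{\mathrm{op}}_{\hat M}$-flavoured expression $W^*(1 \otimes T)W$ makes sense for $T \in \bd(\LT)$ and lies in a suitable multiplier-type algebra; slicing by $\om \in M_*^+$ on the first leg yields a normal completely positive map $\Theta(\om)$ on $\bd(\LT)$ with $\Theta(\om)(1) = (\om \otimes \id)(W^*W) = \om(1)1$. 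So $\Theta(\om)$ is (a scalar multiple of) a normal unital completely positive map on $\bd(\LT)$.

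Next, the key point is that a normal CP map on $\bd(H)$ always has a Kraus decomposition $\Theta(\om)(T) = \sum_k \hat w_k^* T \hat w_k$ with weak*-convergent sums, by the standard structure theory (Stinespring plus the fact that representations of $\bd(H)$ on a normal module are amplifications). The substantive claim is that the Kraus operators can be chosen \emph{inside $\hat M$} and so that \emph{both} $\sum_k \hat w_k^* \hat w_k$ and $\sum_k \hat w_k \hat w_k^*$ equal $\om(1)1$. For the first: since $\Theta(\om)$ is implemented by conjugating with $W \in M \bar\otimes \hat M$ and then slicing out the $M$-leg, the resulting map has range commuting with... more precisely, its multiplicative domain and the commutation relation $W(M' \bar\otimes 1) W^* \subseteq M' \bar\otimes \hat M$ force the Kraus operators to lie in $\hat M$. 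Concretely, I would write $\om = \sum_i f_i \overline{f_i}$-style as a sum of vector functionals $\omega_{\xi_i}$ (or use $\om = \fee_x$ via the identification in (\ref{je1}) and a spectral/square-root decomposition of $x$), giving $\Theta(\om)(T) = \sum_i (\omega_{\xi_i} \otimes \id)(W^*(1\otimes T)W)$, and for each $i$ recognize $(\omega_{\xi_i} \otimes \id)(W^*(1 \otimes T)W)$ as $V_i^* T V_i$ where $V_i \in \hat M$ is essentially $(\omega_{\xi_i,\cdot} \otimes \id)(W)$-type an element of $\hat M$ (since $W \in M \bar\otimes \hat M$, slicing the first leg lands in $\hat M$, after suitable manipulation on the $T$-leg). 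Relabeling $\{V_i\}$ together with an orthonormal basis gives the net $(\hat w_k)$.

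The role of unimodularity enters in the symmetry $\sum_k \hat w_k^* \hat w_k = \sum_k \hat w_k \hat w_k^*$. Here I would use that when $\fee = \psi$ is tracial, $\hat\G$ is also unimodular, the scaling group and modular elements are trivial, and the fundamental unitary satisfies the extra symmetry $\hat W = \Sigma W^* \Sigma$ with a unitary antipode $R$ (the $L^2$-implemented antipode, an anti-automorphism) that intertwines the two slice maps: $\hat R(\hat w_k)$ realizes the opposite Kraus system, and tracial invariance of $\fee$ under $R$ makes $R(\om(1)1) = \om(1)1$. In other words, applying the unitary antipode (which on $\bd(\LT)$ is conjugation by a canonical conjugate-linear isometry $J$) converts $\Theta(\om)$ into a map with Kraus operators $\{\hat w_k^*\}$-up-to-$\hat R$ and preserves the value on $1$; comparing the two shows $\sum \hat w_k \hat w_k^* = \om(1)1$ as well. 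I expect this symmetry step — cleanly identifying the antipode's action on the Kraus operators and verifying it does not disturb the normalization — to be the main obstacle; the CP/Kraus decomposition itself is routine operator-algebra structure theory, and membership in $\hat M$ is essentially a restatement of $W \in M \bar\otimes \hat M$.
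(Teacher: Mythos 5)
Your proposal follows essentially the same route as the paper: write $\om=\om_\xi|_{\LL}$ using standardness, resolve the identity against an orthonormal basis $(e_k)$ to get explicit Kraus operators $\hat w_k=(\om_{\xi,e_k}\otimes\id)(W)\in\hat M$ with $\sum_k\hat w_k^*\hat w_k=\Theta(\om)(1)=\om(1)1$, and then use the Kac-algebra symmetry of $W$ under the modular conjugations to obtain the second normalization. The paper makes your antipode step precise via the involution $\om^o(x)=\om(\hat Jx^*\hat J)$ and the identity $(\om_{\hat J\xi,\hat Je_k}\otimes\id)(W)=(\om_{e_k,\xi}\otimes\id)(W^*)$ from Enock--Schwartz, which exhibits the Kraus system of $\Theta(\om^o)$ relative to the basis $(\hat Je_k)$ as exactly $(\hat w_k^*)$, whence $\sum_k\hat w_k\hat w_k^*=\om^o(1)1=\om(1)1$.
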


\begin{proof} Since $\Theta(\om)(T)=(\om\ten\id)W^*(1\ten T)W,\hs T\in\bd(\LT)$, defines a normal completely positive $\hat{M}'$-bimodule map on $\bd(\LT)$, there exists a net $(\hat{w}_k)_{k\in K}$ in $\hat{M}$ satisfying
\begin{equation*}(\om\ten\id)W^*(1\ten T)W=\sum_{k\in K}\hat{w}_k^*T\hat{w}_k\end{equation*}
for all $T\in\bd(\LT)$ \cite{Haa}. Moreover, since $\LL$ is standardly represented on $\LT$, we have $\om=\om_\xi|_{\LL}$ for some vector $\xi\in\LT$. Thus, resolving the identity with any orthonormal basis $(e_k)_{k\in K}$ yields a Kraus decomposition of $\Theta(\om)$ with $\hat{w}_k=(\om_{\xi,e_k}\ten\id)(W)$. Clearly, $\sum_{k\in K}\hat{w}_k^*\hat{w}_k=\om(1)1$. To obtain the remaining sum we exploit unimodularity and use the involution on $M_*$, which yields a new element $\om^o\in M_*^+$ given by $\om^o(x)=\om(\hat{J}x^*\hat{J})$ for $x\in\LL$, where $\hat{J}$ is the conjugate linear isometry arising from the standard representation of $\hat{M}$ on $\LT$. It follows that $\om^o=\om_{\hat{J}\xi}|_{\LL}$, and so resolving the identity with the orthonormal basis $(\hat{J}e_k)_{k\in K}$ yields the Kraus decomposition
\begin{equation*}\Theta(\om^o)(T):=(\om^o\ten\id)W^*(1\ten T)W=\sum_{k\in K}\hat{v}^*_kT\hat{v}_k,\hs\hs T\in\bd(\LT),\end{equation*}
where $\hat{v}_k=(\om_{\hat{J}\xi,\hat{J}e_k}\ten\id)(W)\in\hat{M}$. But $(\om_{\hat{J}\xi,\hat{J}e_k}\ten\id)(W)=(\om_{e_k,\xi}\ten\id)(W^*)$ by \cite[Proposition 2.4.6]{ES} (as unimodular quantum groups are Kac algebras), so that $\hat{v}_k=\hat{w}_k^*$. Hence, $\sum_{k\in K}\hat{w}_k\hat{w}_k^*=\sum_{k\in K}\hat{v}_k^*\hat{v}_k=\om^o(1)1=\om(1)1$.\end{proof}

For a unimodular locally compact quantum group $\G$, and $1\leq p \leq 2$, $\frac1p +\frac1q =1$, the noncommutative Fourier transform
$\mc{F}_p : \Lp \rightarrow  L^q(\hat\G)$
is the (unique) extension of the map $\mc{M}_\vphi \ni x \mapsto \lambda(\fee_x)\in \LLL$. The Hausdorff--Young inequality \cite[Theorem 3.2]{Coo} then states that
$\mc{F}_p$ is a contraction. Moreover, $\mc{F}:=\mc{F}_2$ is an isometric isomorphism of $\LT$ onto $ L^2(\hat{\G})$.

Given a positive $\rho\in\mc{T}(\LT)$ with $\tr(\rho)=1$, we let $D\in\LO_1^+$ be the density of $\rho|_{\LL}$, in the sense that $\tr(\rho x)=\vphi(Dx)$ for all $x\in\LL$. We also let $\hat{\rho}=\mc{F}\rho\mc{F}^*\in\mc{T}( L^2(\hat{\G}))$, and consider the associated density $\hat{D}\in L^1(\hat{\G})_1^+$.

\begin{thm}\label{theorem} Let $\G$ be a unimodular locally compact quantum group, and $\rho\in\mc{T}(\LT)$ be positive with $\tr(\rho)=1$. Then for $D\in\LO_1^+$ and $\hat{D}\in L^1(\hat{\G})_1^+$ as above satisfying $|H(D,\vphi)|,| H(\hat{D},\hat{\vphi})|<\infty$, we have
\begin{equation}\label{inq}H(D,\vphi)+ H(\hat{D},\hat{\vphi})\geq H(\rho,\tr).\end{equation}\end{thm}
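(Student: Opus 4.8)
The plan is to carry over the Frank--Lieb strategy: turn the left-hand side into a lower bound for $H(\rho,\tr)$ via the Gibbs variational principle, and then control the resulting exponential trace using the Golden--Thompson inequality and \lemref{Kraus}.

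\emph{Gibbs reduction.} First I would record that the two density entropies are traces against $\rho$. Since $\tr(\rho x)=\vphi(Dx)$ for $x\in\LL$, the self-adjoint operator $\log D$ — affiliated with $\LL$ acting by left multiplication on $\LT$ — satisfies $H(D,\vphi)=-\tr(\rho\log D)$; likewise, since $\hat\rho=\mc{F}\rho\mc{F}^*$, the self-adjoint operator $\mc{F}^*(\log\hat D)\mc{F}$ on $\LT$ satisfies $H(\hat D,\hat\vphi)=-\tr\big(\rho\,\mc{F}^*(\log\hat D)\mc{F}\big)$. Applying \lemref{Gibbs} to $\rho$ with $A:=-\log D-\mc{F}^*(\log\hat D)\mc{F}$ and rearranging would give
\[ H(D,\vphi)+H(\hat D,\hat\vphi)-H(\rho,\tr)\ \geq\ -\log\tr\!\big(e^{\,\log D+\mc{F}^*(\log\hat D)\mc{F}}\big), \]
reducing the theorem to the operator inequality $\tr\big(e^{\log D+\mc{F}^*(\log\hat D)\mc{F}}\big)\leq 1$.

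\emph{Golden--Thompson and the key estimate.} I would then apply \lemref{GT} to $\log D$ and $\mc{F}^*(\log\hat D)\mc{F}$ (noting $e^{\mc{F}^*(\log\hat D)\mc{F}}=\mc{F}^*\hat D\mc{F}$) to get
\[ \tr\big(e^{\log D+\mc{F}^*(\log\hat D)\mc{F}}\big)\ \leq\ \tr\big(D^{1/2}\mc{F}^*\hat D\mc{F}D^{1/2}\big)\ =\ \big\|\mc{F}^*\hat D^{1/2}\mc{F}\,D^{1/2}\big\|_{\mathrm{HS}}^{2}, \]
the Hilbert--Schmidt norm being that of $\bd(\LT)$. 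This is the step where unimodularity should enter: writing $\om:=\rho|_{\LL}\in M_*^+$ (so $\om(1)=1$ and $\om$ has density $D$), \lemref{Kraus} furnishes a Kraus decomposition of $\Theta(\om)$ by operators $\hat w_k\in\LLL$ with $\sum_k\hat w_k^*\hat w_k=\sum_k\hat w_k\hat w_k^*=1$ — the second equality being exactly the unimodularity input — and this set of orthogonality relations for the fundamental unitary $W$ should identify the Hilbert--Schmidt norm above as $\|\Lambda_\vphi(D^{1/2})\|_{\LT}^2\cdot\|\Lambda_{\hat\vphi}(\hat D^{1/2})\|_{L^2(\hat\G)}^2=\vphi(D)\,\hat\vphi(\hat D)$. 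Since $\vphi(D)=\tr(\rho)=1$ and $\hat\vphi(\hat D)=\tr(\hat\rho)=1$, this yields $\tr\big(e^{\log D+\mc{F}^*(\log\hat D)\mc{F}}\big)\leq 1$ and hence the theorem.

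\emph{Main obstacle.} The delicate issues will be technical. Gibbs and Golden--Thompson require self-adjoint exponents bounded from above and a finite trace, whereas for a general state $\rho$ the densities $D,\hat D$ are only positive operators affiliated with $\LL,\LLL$. I would therefore first establish the inequality assuming $D$ and $\hat D$ bounded (and bounded below on their supports), and then pass to the general case by a monotone/approximation argument, invoking lower semicontinuity of $x\mapsto\vphi(x\log x)$ — hence of $-H(\cdot,\vphi)$ and $-H(\cdot,\hat\vphi)$ — and of von Neumann entropy, together with the hypotheses $|H(D,\vphi)|,|H(\hat D,\hat\vphi)|<\infty$. I expect this reduction, together with the precise identification of the Hilbert--Schmidt norm via the orthogonality relations for $W$, to be the real work; once they are in place the Gibbs/Golden--Thompson skeleton is routine.
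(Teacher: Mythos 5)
Your proposal follows the paper's proof essentially step for step: Gibbs (\lemref{Gibbs}) with $A=-\log D-\mc{F}^*\log\hat{D}\mc{F}$, Golden--Thompson (\lemref{GT}), and \lemref{Kraus} applied to $\om=\vphi_D$, with unimodularity entering through $\sum_k\hat{w}_k\hat{w}_k^*=\vphi(D)1$, followed by a truncation argument; the target value $\vphi(D)\hat{\vphi}(\hat{D})=1$ is exactly what the paper obtains. Two remarks on the parts you defer. The Hilbert--Schmidt evaluation is not a formal factorization into $\|\Lambda_\vphi(D^{1/2})\|^2\,\|\Lambda_{\hat{\vphi}}(\hat{D}^{1/2})\|^2$: the paper expands the norm in orthonormal bases chosen in $\N_\vphi$ and $\N_{\hat{\vphi}}$ to rewrite it as $\hat{\vphi}\big((\vphi_D\ten\id)W^*(1\ten\hat{D})W\big)$, and only then uses the Kraus relations together with normality of $\hat{\vphi}$ (to exchange $\hat{\vphi}$ with the weak* convergent sum) and traciality of $\hat{\vphi}$ (to replace $\hat{w}_k^*\hat{D}\hat{w}_k$ by $\hat{D}\hat{w}_k\hat{w}_k^*$); these are the precise points where duality and unimodularity do the work you gesture at. Second, for the general case the paper neither approximates $\rho$ nor assumes $\hat{D}$ bounded: it keeps $\rho$ fixed and truncates only the exponents, setting $A_n=-\log D_n-\mc{F}^*\log\hat{D}_n\mc{F}$ with $D_n,\hat{D}_n$ spectral truncations, so Gibbs still applies to the original $\rho$ and gives $-\vphi(D\log D_n)-\hat{\vphi}(\hat{D}\log\hat{D}_n)\geq H(\rho,\tr)-\log\big(\vphi(D_n)\hat{\vphi}(\hat{D}_n)\big)$, after which monotone convergence and the finiteness hypotheses finish the proof. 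This sidesteps your semicontinuity plan, which as stated is risky: approximating $\rho$ so that $D$ and $\hat{D}$ are simultaneously bounded is not straightforward (modifying $D$ perturbs $\hat{D}$ uncontrollably), and von Neumann entropy is only lower semicontinuous, so the inequality directions you would need do not all come for free.
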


\begin{proof} We follow along similar lines as in \cite{FL}. First consider the case when $D\in\Mphi$ and $\hat{D}\in\mc{M}_{\hat{\vphi}}$. Then
\begin{equation*}H(D,\vphi)+ H(\hat{D},\hat{\vphi})=-\tr(\rho\log D)-\tr(\rho\mc{F}^*\log\hat{D}\mc{F})=\tr(\rho A),\end{equation*}
where $A=-\log D-\mc{F}^*\log\hat{D}\mc{F}$. Letting $(e_i)_{i\in I}$ be an orthonormal basis of $\LT$ consisting of self-adjoint operators in $\Nphi$ for all $i\in I$, and $(\hat{e}_j)_{j\in J}$ be an orthonormal basis of $ L^2(\hat{\G})$ in $\N_{\hat{\vphi}}$, Lemma \ref{GT} then implies
\begin{align*}\tr(e^{-A})&\leq\tr(D^{1/2}\mc{F}^*\hat{D}\mc{F}D^{1/2})=\sum_{i\in I}\sum_{j\in J}|\la\hat{D}^{1/2}\mc{F}D^{1/2}e_i,\hat{e}_j\ra_{ L^2(\hat{\G})}|^2\\
                         &=\sum_{i\in I}\sum_{j\in J}|\la\mc{F}D^{1/2}e_i,\hat{D}^{1/2}\hat{e}_j\ra_{ L^2(\hat{\G})}|^2=\sum_{i\in I}\sum_{j\in J}|\la\lm(D^{1/2}e_i),\hat{D}^{1/2}\hat{e}_j\ra_{ L^2(\hat{\G})}|^2\\
                         &=\sum_{i\in I}\sum_{j\in J}|(\vphi_{D^{1/2}e_i}\ten\hat{\vphi}_{(\hat{D}^{1/2}\hat{e}_j)^*})(W)|^2=\sum_{i\in I}\sum_{j\in J}|(\vphi\ten\hat{\vphi})(W(D^{1/2}e_i\ten\hat{e}_j^*\hat{D}^{1/2}))|^2\\
                         &=\sum_{i\in I}\sum_{j\in J}|(\vphi\ten\hat{\vphi})((1\ten\hat{D}^{1/2})W(D^{1/2}\ten 1)(e_i\ten\hat{e}_j^*))|^2\\
                         &=\sum_{i\in I}\sum_{j\in J}|\la(1\ten\hat{D}^{1/2})W(D^{1/2}\ten 1),e_i\ten\hat{e}_j\ra_{\LT\ten L^2(\hat{\G})}|^2\\
                         &=\norm{(1\ten\hat{D}^{1/2})W(D^{1/2}\ten 1)}^2_{ L^2(\G)\ten L^2(\hat{\G})}=\hat{\vphi}((\vphi_D\ten\id)W^*(1\ten\hat{D})W).\end{align*}
Identifying $\hat{M}$ with $\LLL$, by Lemma \ref{Kraus} there exists a net $(\hat{w}_k)_{k\in K}$ in $ L^{\infty}(\hat{\G})$ satisfying \begin{equation*}(\vphi_D\ten\id)W^*(1\ten\hat{D})W=\sum_{k\in K}\hat{w}_k^*\hat{D}\hat{w}_k\hs\hs\text{and}\hs\hs\sum_{k\in K}\hat{w}_k^*\hat{w}_k=\sum_{k\in K}\hat{w}_k\hat{w}_k^*=\vphi(D)1,\end{equation*}
where all sums converge in the weak* topology of $\bd( L^2(\hat{\G}))$. Indexing by finite subsets $F$ of $K$, $\hat{D}_F:=\sum_{k\in F}\hat{w}_k^*\hat{D}\hat{w}_k$ defines a bounded increasing net of positive operators, so that $\hat{D}_F$ converges strongly to its supremum. Since the weak operator topology is equivalent to the weak* topology on bounded subsets of $\bd( L^2(\hat{\G}))$, it follows that $\sup_F\hat{D}_F=(\vphi_D\ten\id)W^*(1\ten\hat{D})W$. Thus, by normality of $\hat{\vphi}$
\begin{equation*}\hat{\vphi}((\vphi_D\ten\id)W^*(1\ten\hat{D})W)=\sup_F\hat{\vphi}\bigg(\sum_{k\in F}\hat{w}_k^*\hat{D}\hat{w}_k\bigg)=\sup_F\hat{\vphi}\bigg(\sum_{k\in F}\hat{D}\hat{w}_k\hat{w}_k^*\bigg)=\vphi(D)\hat{\vphi}(\hat{D})=1.\end{equation*}
Hence, $\tr(e^{-A})\leq1$ and Lemma \ref{Gibbs} yields
\begin{equation*}H(D,\vphi)+ H(\hat{D},\hat{\vphi})\geq H(\rho,\tr).\end{equation*}
In the general case, for $n\in\mathbb{N}$, we let $D_n:=\chi_{[0,n]}(D)\in\M_{\vphi}$ and $\hat{D}_n:=\chi_{[0,n]}(\hat{D})\in\M_{\hat{\vphi}}$. Then with $A_n:=-\log D_n-\mc{F}^*\log\hat{D}_n\mc{F}$, the above argument yields $\tr(e^{-A_n})\leq\vphi(D_n)\hat{\vphi}(\hat{D}_n)<\infty$. Thus, by monotonicity (see \cite[\S IX.2]{Tak2} for details)
\begin{align*}H(D,\vphi)+ H(\hat{D},\hat{\vphi})&=\lim_{n\rightarrow\infty}\bigg(-\vphi(D\log D_n)-\hat{\vphi}(\hat{D}\log\hat{D}_n)\bigg)\\
&\geq\lim_{n\rightarrow\infty}\bigg(H(\rho,\tr)-\log(\vphi(D_n)\hat{\vphi}(\hat{D}_n))\bigg)=H(\rho,\tr).\end{align*}
\end{proof}

\begin{remark} Using the theory of generalized $s$-numbers of measurable operators affiliated to semi-finite von Neumann algebras (cf. \cite{FK}), the above relation (\ref{inq}) becomes a classical inequality relating probability measures on $(0,\infty)$. Indeed, for any positive $\rho\in\mc{T}(\LT)$ with $\tr(\rho)=1$, the associated densities $D\in\LO$ and $\hat{D}\in L^1(\hat{\G})$ are positive self-adjoint operators affiliated to the semi-finite von Neumann algebras $\LL$ and $\LLL$, respectively. Denoting their spectral decompositions by $(e_\lm)$ and $(\hat{e}_\lm)$, their $t^{th}$ singular numbers are $\mu_t(D)=\inf\{s\geq0\mid\vphi(e_{(s,\infty)})\leq t\}$ and $\hat{\mu}_t(\hat{D})=\inf\{s\geq0\mid\hat{\vphi}(\hat{e}_{(s,\infty)})\leq t\}$, for $t>0$. These form probability densities on $(0,\infty)$ satisfying
\begin{equation*}H(D,\vphi)=-\int_{0}^{\infty}\mu_t(D)\log\mu_t(D)dt\hs\hs\text{and}\hs\hs H(\hat{D},\hat{\vphi})=-\int_0^{\infty}\hat{\mu}_t(\hat{D})\log\hat{\mu}_t(\hat{D})dt\end{equation*}
by \cite[Remark 3.3]{FK}, where $dt$ denotes the Lebesgue measure. Thus, in this setting it appears that the generalized singular numbers of operators and their non-commutative Fourier transforms behave in a similar manner to the classical Fourier transforms of functions.\end{remark}

\section{Special Cases and Complementarity}

A locally compact quantum group $\G$ is said to be \e{compact} if $\vphi$ is finite. We say that a compact quantum group is of \e{Kac type} if $\vphi$
is a tracial state. The representation theory for such quantum groups closely parallels that of compact groups (cf. \cite{ES,W}),
and we shall use this theory to elucidate Theorem \ref{theorem} in this setting. Our result may be seen as a generalization of \cite[Theorem 2]{AR}
- the strongest known quantitative uncertainty principle for arbitrary compact groups - to the setting of compact quantum groups of Kac type. We also show that finite-dimensional quantum groups give rise to canonical complementary subalgebras in the sense of Petz (cf. \cite{Petz}). We begin with a short review of the necessary tools from representation theory. Our reference throughout is \cite{ES}.

Let $\G = (M, \Gamma, \fee, \psi)$ be a compact quantum group of Kac type. In this case $M_*$ becomes an involutive Banach algebra. By a \e{representation} of $M_*$, we therefore mean a $\ast$-homomorphism $\alpha: M_*\rightarrow\bd(H_\alpha)$.
We assume the reader is familiar with the notions of irreducibility, non-degeneracy, and unitary equivalence for representations of involutive Banach algebras.
We have the following facts about $\G$: every irreducible representation $\alpha$ of $M_*$ is finite-dimensional and is unitarily equivalent to a sub-representation of the left regular representation $\lm$ with multiplicity $d_\alpha:=\dim H_\alpha$, and every non-degenerate representation of $M_*$ can be decomposed into a direct sum of irreducible representations. Thus, $\lm$ is unitarily equivalent to the direct sum (of equivalence classes) of irreducible representations $\alpha$, each occurring with multiplicity $d_\alpha$. We remark that every irreducible $\alpha$ has a unitary generator $u^{\alpha}\in\LL\ten M_{d_\alpha}(\C)$ satisfying $\alpha(f)=(f\ten\id)(u^{\alpha})$ for $f\in M_*$ and
\begin{equation*}\Gam(u^{\alpha}_{ij})=\sum_{k=1}^{d_\alpha}u^{\alpha}_{ik}\ten u^{\alpha}_{kj}\end{equation*}
for all $1\leq i,j\leq d_{\alpha}$. The generator $u^{\alpha}$ is called a \e{unitary co-representation} of $\G$.

As in the group case, $\LT$ becomes a Banach algebra, and there exists a continuous homomorphism $b:\LT\rightarrow M_*$, which, under the canonical identification (\ref{je1}), is the inclusion of $\LT$ into $\LO$.

In light of the above, if $I$ denotes the set of equivalence classes of irreducible representations, it follows that the Fourier transform
\begin{equation}\label{je2}
\mc{F}:\LT\ni x \,\rightarrow\,\bigoplus_{\alpha\in I}\,\alpha(b(x))\in\,{\tiny{\ell^2-}}\bigoplus_{\alpha\in I}\,\mc{HS}(H_\alpha)
\end{equation}
is a Hilbert space isomorphism, where the norm in ${\tiny{\ell^2}-}\bigoplus_{\alpha\in I}\mc{HS}(H_\alpha)$ is given by
\begin{equation*}
\bignorm{\,\bigoplus_{\alpha\in I}\alpha(b(x))\,}^2 \,=\, \sum_{\alpha\in I}\, d_\alpha\cdot\tr_{\alpha}\big(\,\alpha(b(x))^* \, \alpha(b(x))\,\big)\,,
\end{equation*}
Above, $\tr_\alpha$ denotes the unnormalized trace on $M_{d_\alpha}(\C)$, and the factor of $d_\alpha$ accounts for the multiplicity of $\alpha$ in the left regular representation. Under the identification (\ref{je2}) we have $\ell^\infty(\hat\G)=\tiny{\ell^\infty-}\bigoplus_{\alpha\in I}\,\mc{B}(H_\alpha)\ten1_{d_\alpha}$. In the case of compact groups, for $f\in L^2(G)$, the matrix $\alpha(b(f))=\hat{f}(\alpha)^{\mathfrak t}$, where $\hat{f}(\alpha)=\int_Gf(s)\alpha(s)^*ds$ and $\mathfrak t$ denotes the transpose.

If $\rho\in\mc{T}(\LT)$ is positive, then as above we define $D\in\LO$ to be the density associated to $\rho|_{\LL}$, and $\hat{D}\in\ell^1(\hat\G)$ to be the density associated to $\mc{F}\rho\mc{F}^*|_{\ell^{\infty}(\hat{\G})}$, which in this case is given by the direct sum $\hat{D}=\oplus_{\alpha}\hat{D}^{\alpha}$ satisfying $\sum_{\alpha}d_{\alpha}\tr_{\alpha}(\hat{D}^{\alpha})=\tr(\rho)$.

\begin{thm}\label{compact} Let $\G$ be a compact quantum group of Kac type, and $\rho\in\mc{T}(\LT)$ be positive and non-zero. Then for $D\in\LO$ and $\hat{D}\in\ell^1(\hat{\G})$ as above, we have
\begin{equation}\label{upcompact}\vphi(s(D))\,\bigg(\sum_{\alpha\in I}\,d_\alpha\cdot\textnormal{rank}(\hat{D}^{\alpha})\bigg) \,\geq \,e^{H\left(\frac{\rho}{\tr(p)},\tr\right)}\,, \end{equation}
where $s(D)$ is the support projection of $\vphi_D\in M_*^+$. In particular, for $x\in\LT$, $x\neq0$,
\begin{equation*}\vphi(s(|x|^2))\,\bigg(\sum_{\alpha\in I}\,d_\alpha\cdot\textnormal{rank}(\alpha(b(x))\bigg) \,\geq 1\end{equation*}\end{thm}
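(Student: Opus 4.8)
The strategy is to combine the uncertainty principle of Theorem~\ref{theorem} with the elementary ``maximum entropy'' bound: \emph{if $\tau$ is a normal semi-finite faithful trace on a von Neumann algebra $N$ and $D\geq0$ is affiliated to $N$ with $\tau(D)=1$ and support projection $s(D)$, then}
\[
-\tau(D\log D)\;\leq\;\log\tau(s(D)),
\]
\emph{with equality if and only if $D=s(D)/\tau(s(D))$.} I would prove this first. Writing $D=\int_{(0,\infty)}\lambda\,de_\lambda$, the measure $d\mu(\lambda):=\lambda\,d\tau(e_\lambda)$ is a probability measure on $(0,\infty)$ because $\int\lambda\,d\tau(e_\lambda)=\tau(D)=1$; hence Jensen's inequality applied to the concave function $\log$ gives $-\tau(D\log D)=\int\log(\lambda^{-1})\,d\mu(\lambda)\leq\log\int\lambda^{-1}\,d\mu(\lambda)=\log\tau(s(D))$. (Equivalently, this is nonnegativity of the relative entropy of $D$ against the maximally mixed density $s(D)/\tau(s(D))$, i.e.\ Klein's inequality.)

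Applying this bound to $D$ with $\vphi$ and to $\hat D$ with the dual trace $\hat\vphi$ yields $H(D,\vphi)\leq\log\vphi(s(D))$ and $H(\hat D,\hat\vphi)\leq\log\hat\vphi(s(\hat D))$. Under the identification $\ell^\infty(\hat\G)=\ell^\infty\text{-}\bigoplus_{\alpha}\mc{B}(H_\alpha)\ten 1_{d_\alpha}$ the Plancherel weight is $\hat\vphi\big(\bigoplus_\alpha x^\alpha\ten1_{d_\alpha}\big)=\sum_\alpha d_\alpha\tr_\alpha(x^\alpha)$ (consistent with the normalization $\sum_\alpha d_\alpha\tr_\alpha(\hat D^\alpha)=\tr(\rho)$ recorded above), and $s(\hat D)=\bigoplus_\alpha s(\hat D^\alpha)\ten1_{d_\alpha}$, so that $\hat\vphi(s(\hat D))=\sum_\alpha d_\alpha\,\textnormal{rank}(\hat D^\alpha)$.

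Now I would assemble the inequality. A compact quantum group of Kac type is unimodular, and replacing $\rho$ by $\rho/\tr(\rho)$ rescales $D$ and $\hat D$ by a positive constant, leaving their support projections and ranks unchanged; so it suffices to take $\tr(\rho)=1$, in which case the right-hand side of (\ref{upcompact}) reads $e^{H(\rho,\tr)}$. If $\sum_\alpha d_\alpha\,\textnormal{rank}(\hat D^\alpha)=\infty$ the inequality is trivial, since its left-hand side then equals $\vphi(s(D))\cdot\infty=\infty$ (here $\vphi(s(D))>0$ because $\vphi$ is faithful and $D\neq0$). Otherwise Theorem~\ref{theorem} applies and, combined with the two entropy bounds,
\[
H(\rho,\tr)\;\leq\;H(D,\vphi)+H(\hat D,\hat\vphi)\;\leq\;\log\!\Big(\vphi(s(D))\cdot\!\sum_{\alpha\in I}d_\alpha\,\textnormal{rank}(\hat D^\alpha)\Big);
\]
exponentiating gives (\ref{upcompact}). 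For the last assertion, take $\rho$ to be the rank-one projection onto $x\in\LT$, $x\neq0$; then the normalized state $\rho/\tr(\rho)$ is pure, so $H(\rho/\tr(\rho),\tr)=0$ and the right-hand side of (\ref{upcompact}) is $1$. From $\tr(\rho\,y)=\la yx,x\ra_{\LT}$ and traciality of $\vphi$ one identifies $D$ with a positive multiple of $xx^*$ (viewing $x$ as an operator affiliated to $\LL$), whence polar decomposition and invariance of $\vphi$ under Murray--von Neumann equivalence give $\vphi(s(D))=\vphi(s(\abs{x}^2))$; likewise $\hat D^\alpha$ is a positive multiple of $\alpha(b(x))\alpha(b(x))^*$, so $\textnormal{rank}(\hat D^\alpha)=\textnormal{rank}(\alpha(b(x)))$, and substituting into (\ref{upcompact}) yields the claim.

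I expect the main obstacle to be bookkeeping rather than any genuine difficulty. Two points require care: verifying the hypotheses needed to invoke Theorem~\ref{theorem}---the upper bounds above already pin $H(D,\vphi)\leq\log\vphi(s(D))\leq0$ and force $H(\hat D,\hat\vphi)<\infty$ in the nontrivial case, so finiteness is not a serious issue---and, more delicately, matching $D$ and $\hat D$ with $\abs{x}^2$ and $\alpha(b(x))$ through the transpose/adjoint conventions built into the Fourier transform (\ref{je2}), where unimodularity is used to pass between left and right support projections. The only non-formal ingredient, the bound $-\tau(D\log D)\leq\log\tau(s(D))$, is a one-line consequence of Jensen's inequality.
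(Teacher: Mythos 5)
Your proposal is correct and follows essentially the same route as the paper: bound $H(D,\vphi)\leq\log\vphi(s(D))$ (you via Jensen/Klein, the paper via the scalar inequality $\alpha-\alpha\log\alpha\leq\beta-\alpha\log\beta$ and functional calculus), bound $H(\hat D,\hat\vphi)$ by the logarithm of $\sum_\alpha d_\alpha\,\textnormal{rank}(\hat D^\alpha)$ (the paper phrases this as $H(A,\tr)\leq\log\textnormal{rank}(A)$ for the block density with multiplicities, which is the same estimate as your $\log\hat\vphi(s(\hat D))$), then apply Theorem~\ref{theorem} and exponentiate, with the last claim obtained from $\rho=\om_x$, $H=0$ for pure states, and $\hat D=\oplus_\alpha\alpha(b(x))\alpha(b(x))^*$. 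Your explicit treatment of the rescaling, the infinite-rank-sum case, and the $xx^*$ versus $\abs{x}^2$ support identification are harmless refinements of the same argument.
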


\begin{proof} By dividing through by $\tr(\rho)$, without loss of generality we may assume that $\rho$ is state. We first claim that $H(D,\vphi) \leq \log(\vphi(s(D)))$. To show this, we use the inequality $\alpha - \alpha \log(\alpha)\leq \beta - \alpha\log(\beta)$
for $\alpha,\beta\in(0,\infty)$. By functional calculus, this implies
\begin{equation*} D - D \log(D)\leq\frac{1}{\vphi(s(D))} - D \log\left(\frac{1}{\vphi(s(D))}\right)\,.\end{equation*}
Since $\fee(s(D) D) = 1$, applying the positive normal linear functional $\vphi(s(D)\cdot)$ to both sides of the above inequality yields
$H(D,\vphi)\leq\log(\vphi(s(D)))$, which is our claim.

Next, using representation (\ref{je2}) we obtain
\begin{equation*}
 H(\hat{D},\hat{\vphi})=-\sum_{\alpha\in I}\,d_\alpha\cdot\tr_{\alpha}\bigg(\,\hat{D}^{\alpha} \, \log\big(\hat{D}^{\alpha}\big)\,\bigg)=H\bigg(\,\bigoplus_{\alpha\in I}\,\hat{D}^{\alpha}\,\oplus\,\stackrel{d_\alpha}{\cdots}\,\oplus\,\hat{D}^{\alpha}\,,\tr\bigg)\,.\end{equation*}
Since $H(A,\tr)\leq\log(\textnormal{rank}(A))$ for any density matrix $A$, it follows that
\begin{equation*}
 H(\hat{D},\hat{\vphi}) \,\leq\, \log\bigg(\,\sum_{\alpha\in I} \, d_\alpha\cdot\textnormal{rank}\big(\hat{D}^{\alpha})\big)\,\bigg)\,.
\end{equation*}
Putting things together, applying Theorem \ref{theorem} and exponentiating, we obtain
\begin{equation*}
\vphi(s(D))\,\bigg(\,\sum_{\alpha\in I} \, d_\alpha\cdot\textnormal{rank}(\hat{D}^{\alpha})\,\bigg)\,\geq\, e^{H(\rho,\tr)}\,.\end{equation*}
The final statement follows from the observation that for $\rho=\om_x$ with $x\in\LT$, we have $\hat{D}=\oplus_{\alpha\in I} \alpha(b(x))\,\alpha(b(x))^*$. \end{proof}

As an immediate corollary, we obtain a strengthening of \cite[Theorem 2]{AR}:

\begin{cor}Let $G$ be a compact group with normalized Haar measure $\mu_G$, and let $\rho\in\mc{T}( L^2(G))$ be positive and non-zero. Then for $D\in L^1(G)$ and $\hat{D}\in\oplus_{\alpha\in I}\mc{T}(H_\alpha)$ as above, we have
\begin{equation*}\mu_G \,(\textnormal{supp}(D))\,\bigg(\,\sum_{\alpha\in I}\,d_\alpha\cdot\textnormal{rank}(\hat{D}^{\alpha})\,\bigg)\,\geq\,e^{H\left(\frac{\rho}{\tr(p)},\tr\right)}\,.\end{equation*}
In particular, if $f\in L^2(G)$ is non-zero, then
\begin{equation*}
\mu_G \,(\textnormal{supp}(f))\,\bigg(\,\sum_{\alpha\in I}\,d_\alpha\cdot\textnormal{rank}(\hat{f}(\alpha))\,\bigg)\,\geq\,1\,.
\end{equation*}\end{cor}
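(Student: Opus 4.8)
The plan is to obtain both assertions by specializing \thmref{compact} to the commutative quantum group $\G_a = (\LI,\Gam_a,\vphi_a,\psi_a)$ associated to $G$, in which $\vphi_a = \psi_a$ is integration against the normalized Haar measure $\mu_G$. Since $\mu_G$ is a probability measure, $\vphi_a$ is a tracial state, so $\G_a$ is a compact quantum group of Kac type and \thmref{compact} applies, with $\LO = L^1(G)$. Its dual is $\hat{\G}_a = \G_s$, so that $\ell^{\infty}(\hat{\G}_a) = VN(G)$, and the Peter--Weyl decomposition $VN(G) = \bigoplus_{\alpha\in I}\mc{B}(H_\alpha)\ten 1_{d_\alpha}$, together with Haagerup's Plancherel weight weighting the $\alpha$-block by $d_\alpha\tr_\alpha$, is precisely the identification (\ref{je2}) built into \thmref{compact}; here $I$ is the set of equivalence classes of irreducible unitary representations of $G$ and $d_\alpha = \dim H_\alpha$.

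Granting this dictionary, both inequalities are read off directly. For $D\in L^1(G)^+$ the support projection of $\vphi_D\in M_*^+$ is the indicator function $s(D) = \chi_{\textnormal{supp}(D)}$, so $\vphi(s(D)) = \mu_G(\textnormal{supp}(D))$, and with $\hat{D} = \oplus_{\alpha\in I}\hat{D}^\alpha$ as in \thmref{compact} the first displayed inequality of the corollary becomes exactly (\ref{upcompact}). For the ``in particular'' statement I would take $\rho = \om_f$ with $f\in L^2(G)$ non-zero: after normalization this density is a rank-one projection, so $H(\rho/\tr(\rho),\tr) = 0$ and the right-hand side of (\ref{upcompact}) equals $1$; moreover $\textnormal{supp}(|f|^2) = \textnormal{supp}(f)$ up to a $\mu_G$-null set. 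Using $\alpha(b(f)) = \hat{f}(\alpha)^{\mathfrak t}$ with $\hat{f}(\alpha) = \int_G f(s)\alpha(s)^*\,ds$ as recorded above, together with $\hat{D}^\alpha = \alpha(b(f))\alpha(b(f))^*$ from the proof of \thmref{compact}, the elementary identities $\textnormal{rank}(AA^*) = \textnormal{rank}(A)$ and $\textnormal{rank}(A^{\mathfrak t}) = \textnormal{rank}(A)$ give $\textnormal{rank}(\hat{D}^\alpha) = \textnormal{rank}(\hat{f}(\alpha))$, and (\ref{upcompact}) collapses to $\mu_G(\textnormal{supp}(f))\,\big(\sum_{\alpha\in I}d_\alpha\cdot\textnormal{rank}(\hat{f}(\alpha))\big)\geq 1$, as claimed.

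Essentially all of the content lies in the dictionary between the abstract and classical pictures, so the one step requiring care -- the ``main obstacle,'' though it is largely bookkeeping -- is verifying that the abstract Fourier transform (\ref{je2}), restricted to $VN(G) = \ell^{\infty}(\hat{\G}_a)$, reproduces classical Peter--Weyl theory for $G$: that $\alpha(b(f))$ is the classical matrix coefficient $\hat{f}(\alpha)$ up to transpose, that the Haar/Plancherel weight on $VN(G)$ weights the $\alpha$-block by $d_\alpha\tr_\alpha$ so the abstract densities $\hat{D}^\alpha$ coincide with the classical ones, and that the support projection of $\vphi_D$ in $\LI$ is the characteristic function of $\textnormal{supp}(D)$. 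Once these identifications are in place, no further analysis is needed, since the two inequalities are then literal specializations of \thmref{compact}.
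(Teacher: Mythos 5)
Your proposal is correct and is exactly the route the paper intends: the corollary is stated as an immediate specialization of Theorem~\ref{compact} to the commutative quantum group $\G_a=(L^\infty(G),\Gam_a,\vphi_a,\psi_a)$, with the dual identified with $VN(G)$ via Peter--Weyl, $s(D)=\chi_{\mathrm{supp}(D)}$, and the rank/entropy bookkeeping you describe. The paper gives no separate proof, so there is nothing further to compare.
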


The simplest case of a compact quantum group of Kac type is when $\G$ is finite, i.e., $\LL$ is finite-dimensional. In this setting the Haar weight
$\vphi$ on $\LL$ is the restriction of the canonical trace on $\bd(\LT)$ to $\LL$, and if we view $\LLL$ as a subalgebra of $\bd(\LT)$ (via conjugation with the Fourier transform), the dual weight $\hat{\vphi}$ on $\LLL$ is the restriction of the normalized trace $\hat{\vphi}=(\dim\G)^{-1}\tr$ (cf. \cite{ES}). For a state $\rho\in\mc{T}(\LT)$, one may easily verify that the respective densities of $\rho$ and $\hat{\rho}$ are given by $D=E(\rho)$ and $\hat{D}=(\dim\G)\cdot\mc{F}\hat{E}(\rho)\mc{F}^*$, where $E:\bd(\LT)\rightarrow\LL$ and $\hat{E}:\bd(\LT)\rightarrow\LLH$ are the unique trace-preserving conditional expectations onto $\LL$ and $\LLH$, respectively. Theorem \ref{theorem} then takes the following form.

\begin{cor}\label{corollary}
Let $\G$ be a finite-dimensional quantum group. Then for any state $\rho\in\mc{T}(\LT)$ we have
\begin{equation}\label{finite}H(E(\rho),\tr) \,+\, H(\hat{E}(\rho),\tr) \,\geq\, H(\rho,\tr) + \log(\dim\G)\,.\end{equation}\end{cor}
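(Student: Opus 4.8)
The plan is to derive Corollary~\ref{corollary} directly from Theorem~\ref{theorem} by carefully unwinding the relationship between the entropies $H(D,\vphi)$, $H(\hat D,\hat\vphi)$ appearing there and the von Neumann entropies $H(E(\rho),\tr)$, $H(\hat E(\rho),\tr)$ appearing in \eqref{finite}. The key input is the two identities for the densities stated just before the corollary, namely $D=E(\rho)$ and $\hat D=(\dim\G)\cdot\mc{F}\hat E(\rho)\mc{F}^*$, together with the facts that $\vphi=\tr|_{\LL}$ and $\hat\vphi=(\dim\G)^{-1}\tr|_{\LLH}$ (after identifying $\LLH$ as a subalgebra of $\bd(\LT)$ via conjugation by $\mc{F}$). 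Since $\G$ is finite-dimensional everything is a finite sum, so all entropies are automatically finite and the hypotheses of Theorem~\ref{theorem} are satisfied.

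First I would compute $H(D,\vphi)$. Because $\vphi=\tr|_{\LL}$ and $D=E(\rho)\in\LL$, we simply have $H(D,\vphi)=-\vphi(D\log D)=-\tr(E(\rho)\log E(\rho))=H(E(\rho),\tr)$. Next I would compute $H(\hat D,\hat\vphi)$. Write $\hat D_0:=\mc{F}\hat E(\rho)\mc{F}^*\in\LLH$, so that $\hat D=(\dim\G)\hat D_0$. Using $\hat\vphi=(\dim\G)^{-1}\tr$ and the scaling $\log\hat D=\log(\dim\G)+\log\hat D_0$, one gets
\begin{align*}
H(\hat D,\hat\vphi)&=-\hat\vphi(\hat D\log\hat D)=-(\dim\G)^{-1}\tr\big((\dim\G)\hat D_0(\log(\dim\G)+\log\hat D_0)\big)\\
&=-\log(\dim\G)\,\tr(\hat D_0)-\tr(\hat D_0\log\hat D_0)=-\log(\dim\G)+H(\hat D_0,\tr),
\end{align*}
where I used $\tr(\hat D_0)=\tr(\hat E(\rho))=\tr(\rho)=1$ (the trace-preservation of $\hat E$ and the fact that $\mc F$ is a unitary). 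Since von Neumann entropy is invariant under conjugation by the unitary $\mc F$, we have $H(\hat D_0,\tr)=H(\hat E(\rho),\tr)$. Substituting these two computations into the conclusion \eqref{inq} of Theorem~\ref{theorem} gives
\begin{equation*}
H(E(\rho),\tr)+H(\hat E(\rho),\tr)-\log(\dim\G)\geq H(\rho,\tr),
\end{equation*}
which is exactly \eqref{finite} after rearranging.

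The only genuine point requiring care, rather than the routine bookkeeping above, is verifying that the stated density identities $D=E(\rho)$ and $\hat D=(\dim\G)\mc{F}\hat E(\rho)\mc{F}^*$ are correct and that the normalization of $\hat\vphi$ is the claimed $(\dim\G)^{-1}\tr$; these are asserted in the text as facts one "may easily verify" and follow from the definition of $D$ via $\tr(\rho x)=\vphi(Dx)$ together with the defining property of the trace-preserving conditional expectation $E$ (and analogously for $\hat E$, remembering the extra factor relating $\tr$ to $\hat\vphi$). Once those identifications are in hand, the corollary is a direct substitution into Theorem~\ref{theorem}, with no further obstacle.
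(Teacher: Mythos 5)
Your proof is correct and follows essentially the same route as the paper's: identify $H(D,\vphi)$ with $H(E(\rho),\tr)$ via $\vphi=\tr|_{\LL}$, use the scaling $\hat{D}=(\dim\G)\,\mc{F}\hat{E}(\rho)\mc{F}^*$ and $\hat\vphi=(\dim\G)^{-1}\tr$ to get $H(\hat{E}(\rho),\tr)=\log(\dim\G)+H(\hat{D},\hat{\vphi})$, and substitute into Theorem~\ref{theorem}. Your version just spells out the bookkeeping a bit more explicitly; no gap.
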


\begin{proof} On the one hand, since $\fee = \tr$ on $\LL$, we see that $H(D,\vphi)$ coincides with $H(E(\rho),\tr)$. On the other hand, we obtain
\begin{equation*}H(\hat{E}(\rho),\tr)=H\bigg(\frac{\mc{F}^*\hat{D}\mc{F}}{\dim\G},\tr\bigg)=H\bigg(\frac{\hat{D}}{\dim\G},\tr\bigg)=\log(\dim\G)+ H(\hat{D},\hat{\vphi}).\end{equation*}
Putting things together and applying Theorem \ref{theorem} yields the result.\end{proof}

\begin{remark} If $G$ is a finite group and $\rho\in\mc{T}( L^2(G))$ is a state, then inequality (\ref{finite}) reads
\begin{equation*}H(\mu_\rho)+H\bigg(\frac{C_\rho}{|G|},\tr\bigg)\geq H(\rho,\tr)+\log|G|,\end{equation*}
where $H(\mu_\rho)=-\sum_{s\in G}\la\rho\delta_s,\delta_s\ra\log(\la\rho\delta_s,\delta_s\ra)$, and $C_\rho$ is the correlation matrix associated
to the positive definite function $\psi_\rho(s)=\tr(\rho\lm(s))$, i.e., the $s,t$ entry of $C_\rho$ is $\psi_\rho(s^{-1}t)$.
\end{remark}

If $\G$ is a finite-dimensional quantum group such that $\LL$ and $\LLH$ are both commutative, then $\LL=\ell^{\infty}(G)$ and $\LLH= L(G)$ for a finite abelian group $G$, in which case $E$ and $\hat{E}$ are given by
\begin{equation*}
E(\rho) = \sum_{s\in G}\la\rho\delta_s,\delta_s\ra \,|\delta_s\ra\la\delta_s|
\hs\hs\hs\text{and}\hs\hs\hs
\hat{E}(\rho) = \frac{1}{|G|^2}\,\sum_{s\in G}\la\rho\chi^s,\chi^s\ra\,|\chi^s\ra\la\chi^s|
\end{equation*}
for any $\rho\in\mc{T}( L^2(G))$, where $\chi^s$ is the character on $G$ represented by $s\in G$. Inequality (\ref{finite}) then simply expresses the well-known fact that the orthonormal bases $\{\delta_s\mid s\in G\}$ and $\{|G|^{-1/2}\chi^s\mid s\in G\}$ are \e{mutually unbiased}. Since these are the canonical examples of such bases, and their complementary nature relies on abelian group duality, one may view inequality (\ref{finite}) as an extension of this complementarity to finite-dimensional quantum group duality. In fact, $\LL$ and $\LLH$ are \e{complementary subalgebras} in the sense of Petz (cf. \cite{Petz})
for any finite $\G$, where two subalgebras $A,B$ of $M_n(\C)$ are complementary if one of the following equivalent conditions are satisfied, where $\tau$ denotes the normalized trace on $M_n(\C)$:
\begin{itemize}\label{items}
\item If $p\in A$ and $q\in B$ are minimal projections, then $\tau(pq)=\tau(p)\tau(q)$;
\item $A\ominus\C1$ and $B\ominus\C1$ are orthogonal in $M_n(\C)$;
\item $\tau(ab)=\tau(a)\tau(b)$ for all $a\in A$, $b\in B$;
\item $E_B(A)=\C1$, where $E_B$ is the trace-preserving conditional expectation onto $B$.\end{itemize}

The fact that $\LL$ and $\LLH$ are complementary follows from a standard argument, which we now provide for the convenience of the reader. We also note that
the above concept of ``orthogonality'' of subalgebras was studied by Popa \cite{Popa83} in the setting of finite von Neumann algebras,
but we shall stick with the terminology of complementarity.

\begin{prop} Let $\G$ be a finite-dimensional quantum group. Then $\LL$ and $\LLH$ are complementary subalgebras of $\bd(\LT)$ such that $\la\LL\LLH\ra=\bd(\LT)$, where $\la\cdot\ra$ denotes linear span.\end{prop}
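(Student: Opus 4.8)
The plan is to establish the complementarity of $\LL$ and $\LLH$ first, and then to deduce the identity $\la\LL\LLH\ra=\bd(\LT)$ from complementarity together with a dimension count. Set $n:=\dim\G=\dim_{\C}\LL$. Since $\vphi$ is a faithful positive functional on the finite-dimensional algebra $\LL$, its GNS space is $\LL$ itself, so $\dim_{\C}\LT=n$ and $\bd(\LT)\cong M_n(\C)$; moreover $\dim\hat{\G}=\dim\G$ for any finite-dimensional quantum group, so $\dim_{\C}\LLH=n$ as well, whence $\dim_{\C}\LL\cdot\dim_{\C}\LLH=n^2=\dim_{\C}\bd(\LT)$. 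Let $\tau=n^{-1}\tr$ denote the normalized trace on $\bd(\LT)=M_n(\C)$. By the third of the equivalent conditions listed above, to prove complementarity it suffices to show $\tau(ab)=\tau(a)\tau(b)$ for all $a\in\LL$ and $b\in\LLH$.

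Recalling from the discussion preceding \corref{corollary} that $\vphi=\tr|_{\LL}$ and $\hat{\vphi}=(\dim\G)^{-1}\tr|_{\LLH}$, this complementarity identity is equivalent to
\[\tr(ab)=\vphi(a)\,\hat{\vphi}(b),\qquad a\in\LL,\ b\in\LLH.\]
To prove it I would pass to the regular representation: in the finite-dimensional case $\lambda\colon M_{*}\to\LLH$ and $\hat{\lambda}\colon\hat{M}_{*}\to\LL$ are both linear isomorphisms, so it is enough to verify the identity when $b=\lambda(\vphi_x)=(\vphi_x\ot\id)(W)$ and $a=\hat{\lambda}(\hat{\vphi}_{\hat{y}})=(\id\ot\hat{\vphi}_{\hat{y}})(W)$ for $x\in\LL$, $\hat{y}\in\LLH$. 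A direct computation with $W$ (expanding $\tr(ab)$ over a $\vphi$-orthonormal basis of $\LT$ and applying the pentagon relation together with the invariance of the Haar weights $\vphi,\hat{\vphi}$) then collapses the resulting double sum to $(\vphi\ot\hat{\vphi}_{\hat{y}})(W)\cdot(\vphi_x\ot\hat{\vphi})(W)=\vphi(a)\,\hat{\vphi}(b)$. Conceptually cleaner is to identify $\bd(\LT)$ with the Heisenberg double $\LL\#\LLH$ (with $\LL$ acting on $\LL\cong\LT$ by left multiplication and $\LLH$ by the dual action) and to invoke the standard fact that the canonical trace of $M_n(\C)$ restricts to $\vphi\otimes\hat{\vphi}$ on $\LL\#\LLH$; since $a\#b=ab$ this yields $\tr(ab)=\vphi(a)\hat{\vphi}(b)$ at once. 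This trace identity — equivalently, that $\hat{E}(\LL)=\C1$ for the trace-preserving conditional expectation $\hat{E}\colon\bd(\LT)\to\LLH$ — is the substantive content of the proposition and the one step I expect to be a genuine obstacle: it cannot be extracted from the abstract setup alone (the four defining conditions of complementarity are formally equivalent to one another, but not to nothing), and it really uses the structure of $\G$ — the Kac property, the Plancherel identification underlying $\mc{F}$, and the algebraic relations satisfied by $W$.

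Granting complementarity, the span statement is formal. Equip $\LL\ot\LLH$ with the inner product $\tau\ot\tau$ and $\bd(\LT)$ with $\la S,T\ra:=\tau(T^{*}S)$, and let $\Phi\colon\LL\ot\LLH\to\bd(\LT)$, $a\ot b\mapsto ab$, be the multiplication map, whose range is exactly $\la\LL\LLH\ra$. Using traciality of $\tau$ and then complementarity,
\[\la\Phi(a_1\ot b_1),\Phi(a_2\ot b_2)\ra=\tau(a_2^{*}a_1\,b_1 b_2^{*})=\tau(a_2^{*}a_1)\,\tau(b_2^{*}b_1)=\la a_1\ot b_1,\,a_2\ot b_2\ra,\]
so $\Phi$ is isometric, hence injective; since $\dim_{\C}(\LL\ot\LLH)=n^2=\dim_{\C}\bd(\LT)$, it is bijective, and therefore $\la\LL\LLH\ra=\bd(\LT)$. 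Thus the entire proposition rests on the single trace identity of the previous paragraph, everything else being dimension bookkeeping and a formal computation.
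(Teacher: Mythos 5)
Your overall architecture is sound and genuinely different from the paper's in two respects. First, you attack complementarity through the third condition in the list (trace factorization $\tr(ab)=\vphi(a)\hat{\vphi}(b)$, which, with the normalizations $\vphi=\tr|_{\LL}$ and $\hat{\vphi}=(\dim\G)^{-1}\tr|_{\LLH}$, is indeed equivalent to $\tau(ab)=\tau(a)\tau(b)$), whereas the paper works with the fourth condition, exhibiting the trace-preserving conditional expectation $\hat{E}$ onto $\LLH$ explicitly. Second, your derivation of $\la\LL\LLH\ra=\bd(\LT)$ from complementarity — the multiplication map $\Phi$ is isometric for $\tau\ot\tau$ and $\tau$, hence injective, hence surjective by the dimension count $n\cdot n=n^2$ — is correct, self-contained, and actually replaces the paper's appeal to \cite{VV} for the span statement; that is a nice feature of your route.

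The gap is exactly where you say you expect one: the factorization $\tr(ab)=\vphi(a)\hat{\vphi}(b)$ is never proved. "A direct computation with $W$ ... collapses the resulting double sum" is an assertion, not an argument, and the alternative — invoking the "standard fact" that the canonical trace restricts to $\vphi\ot\hat{\vphi}$ on the Heisenberg double $\LL\#\LLH\cong\bd(\LT)$ — is essentially a citation of the proposition itself (that fact \emph{is} the trace factorization together with the span identity), so as a proof it is circular. Moreover, the pentagon relation is not really the needed input; what makes the sum collapse is invariance of the Haar weight. The paper supplies precisely this missing ingredient: it defines $\hat{E}(x)=(\dim\G)^{-1}(\id\ot\vphi)V(x\ot1)V^*$ with $V$ the right fundamental unitary, shows $\Gam^r(\hat{E}(x))=\hat{E}(x)\ot1$ via co-associativity and left invariance of $\vphi$ (so that $\hat{E}(x)\in\LLH$ and $\hat{E}$ is the trace-preserving conditional expectation), and then computes $\hat{E}(a)=(\dim\G)^{-1}(\id\ot\vphi)\Gam(a)=(\dim\G)^{-1}\vphi(a)1$ for $a\in\LL$, again by left invariance. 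If you want to keep your third-condition formulation, your identity then follows in one line from the bimodule property of $\hat{E}$: $\tr(ab)=\tr(\hat{E}(ab))=\tr(\hat{E}(a)b)=(\dim\G)^{-1}\vphi(a)\tr(b)=\vphi(a)\hat{\vphi}(b)$. Some such concrete use of invariance (or an honest execution of your basis computation with $W$) must be added; without it, your proposal only reduces the proposition to its key identity rather than proving it.
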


\begin{proof} The unique trace-preserving conditional expectation $\hat{E}:\bd(\LT)\rightarrow\LLH$ is given by
\begin{equation*}\hat{E}(x)=\frac{1}{\dim\G}(\id\ten\vphi)V(x\ten1)V^*,\hs\hs x\in\bd(\LT),\end{equation*}
where $V\in L^\infty(\hat\G)'\ten\LL$ is the right fundamental unitary of $\G$ (cf. \cite{KV}). Indeed, if $\Gam^r:\bd(\LT)\rightarrow\bd(\LT)\ten\LL$ denotes the map $\Gam^r(x)=V(x\ten1)V^*$, $x\in\bd(\LT)$,
then $\Gam^r$ is the co-associative co-multiplication on $\bd(\LT)$ obtained by (right) extension of $\Gam$, and $\hat{E}$ is the extension of the left
convolution action of the Haar weight on $\LL$ to $\bd(\LT)$ (cf. \cite{JNR}). Clearly, $\hat{E}$ is a complete contraction, and 
$\hat{E}(\hat{x})=\hat{x}$ for $\hat{x}\in\LLH$. On the other hand, by left invariance of $\vphi$, which means $(\id\ten\vphi)\Gam(\cdot)=\vphi(\cdot)1$ on $\LL$,
we see that for any $x\in\bd(\LT)$,
\begin{eqnarray*}
\Gam^r(\hat E(x))
&=&\frac{1}{\dim\G}\Gam^r((\id\ten\vphi)\Gam^r(x))\\
&=& \frac{1}{\dim\G}(\id\ten\id\ten\vphi)(\Gam^r\ten\id)(\Gam^r(x))\\
&=& \frac{1}{\dim\G}(\id\ten\id\ten\vphi)(\id\ten\Gam^r)(\Gam^r(x))\hs\hs\hs\hs(\textnormal{co-associativity})\\
&=& \frac{1}{\dim\G}(\id\ten\id\ten\vphi)(\id\ten\Gam)(\Gam^r(x))\hs\hs\hs\hs(\Gam^r(x)\in\bd(\LT)\ten\LL)\\
&=& \frac{1}{\dim\G}(\id\ten\vphi)(\Gam^r(x))\ten1\hs\hs\hs\hs(\textnormal{left invariance})\\
&=& \hat E(x)\ten 1\,.
\end{eqnarray*}
Thus, $V(\hat E(x)\ten1)=(\hat E(x)\ten1)V$, and applying the slice map $(\id\ten\om)$ to both sides of this equation yields $\rho(\om) \hat E(x)= \hat E(x)\rho(\om)$, for all $\om\in\LO$.
Then, by weak* density of $\rho(\LO)$ in $ L^\infty(\hat\G)'$, we have $E(x)\in L^\infty(\hat\G)''=\LLH$. Thus $\hat{E}$ is a projection of norm one onto $\LLH$. That it is also trace-preserving is clear.

Now, for $x\in\LL$, we have $\displaystyle \hat{E}(x)=\frac{1}{\dim\G}(\id\ten\vphi)\Gam(x)=\frac{1}{\dim\G}\vphi(x)\in\C$ by left invariance, so
the subalgebras $\LL$ and $\LLH$ are complementary. The fact that $\la\LL\LLH\ra=\bd(\LT)$ follows from the general relation
at the level of locally compact quantum groups (cf. \cite[Proposition 2.5]{VV}).\end{proof}

Another question of interest is an entropic characterization of complementarity. For certain classes of subalgebras $A,B$ of $M_n(\C)$, it was shown that the
maximality of the conditional entropy of Connes--St\o rmer (cf. \cite{CS}) is equivalent to complementarity \cite{PSW}. Also, it was recently shown by Choda
that for finite-dimensional subalgebras $A\cong M_n(\C)$ inside a finite von Neumann algebra $M$, the complementarity of $A$ and $uAu^*$ for some unitary $u\in M$ can be characterized by the maximal entropy of a certain density matrix related to $u$ \cite{Choda}. It would be interesting to see whether inequality (\ref{finite}) is necessary/sufficient for complementarity. This is certainly true when $A$ and $B$ are maximal abelian (cf. \cite{P}).

\section{Non-unimodular Setting}

Our aim in this section is to put forth the idea that the relative entropy with respect to the Haar weight is the appropriate candidate for studying entropic properties of general quantum measures. This is primarily justified by providing an uncertainty principle for arbitrary locally compact groups. Interestingly, as we shall see, the non-unimodularity adds an intrinsic degree of freedom to the overall uncertainty. Further justification is provided by our last result, which states that the relative entropy restricted to $q$-traces of discrete quantum groups reduces to the entropy of Hiai--Izumi \cite{HI}, which is crucial for studying the dynamics of the corresponding random walk. We begin with the necessary preliminaries from the spatial theory of von Neumann algebras. For details we refer the reader to \cite{C}.

Let $M$ be a von Neumann algebra on a Hilbert space $H$, and let $\psi$ be a fixed normal semi-finite faithful weight $M'$. A vector $\xi\in H$ is $\psi$-bounded
if the mapping $R^{\psi}(\xi):\N_\psi\ni\Lm_{\psi}(x')\mapsto x'\xi\in H$ extends to a bounded linear operator from $H_\psi$ into $H$. We denote by
$\mc{D}(H,\psi)$ the set of $\psi$-bounded vectors. It follows that $R^{\psi}(\xi)R^{\psi}(\xi)^*\in M$ for all $\xi\in\mc{D}(H,\psi)$.
Then for any normal semi-finite weight $\vphi$ on $M$, the \e{spatial derivative} $d\vphi/d\psi$ is the largest positive self-adjoint operator $T$
on $H$ satisfying
\begin{equation*}
\vphi(R^{\psi}(\xi)R^{\psi}(\xi)^*)=\begin{cases}\bignorm{T^{1/2}\xi}^2_2 &\text{if}\hs\xi\in\mc{D}(H,\psi)\cap\mc{D}(T^{1/2}),\\
                                        +\infty &\text{otherwise}.\end{cases}
                                        \end{equation*}
If $\vphi$ were bounded, so that $\vphi(1)<\infty$, then $\mc{D}(H,\psi)$ is contained in the domain of $(d\vphi/d\psi)^{1/2}$ and is a core for this operator.

Now, let $\G=(\LL,\Gam,\vphi,\psi)$ be an arbitrary locally compact quantum group. We denote $M_*$ by $\LO$ and $H_\vphi$ by $\LT$. Since $\LL$ is standardly presented on $\LT$, every state $\om\in\LO$ is the restriction of a vector state $\om_\xi$ to $\LL$, for some $\xi\in\LT$. Thus, for a state $\om\in\LO$, we define its \e{entropy} to be
\begin{equation}H(\om,\vphi):=\begin{cases}-\bigg\la\log\bigg(\frac{d\om_\xi'}{d\vphi}\bigg)\xi,\xi\bigg\ra &\text{if}\hs\xi\in\mc{D}\bigg(\log\bigg(\frac{d\om_\xi'}{d\vphi}\bigg)\bigg),\\
                                        +\infty &\text{otherwise},\end{cases}\end{equation}
where $\om_\xi'=\om_{\xi}|_{\LL'}$. By properties of the spatial derivative, this definition is independent of the representing vector $\xi$, and is equal to $-S(\om,\vphi)$, where $S(\om,\vphi)$ is the \e{relative entropy} of $\om$ and the left Haar weight $\vphi$ (cf. \cite[\S5]{OP}). For later purposes we note that $H(\om_\xi,\vphi)=H(\om_{J\xi}',\vphi')$, where
\begin{equation*}H(\om_{J\xi},\vphi')=-\bigg\la\log\bigg(\frac{d\om_{J\xi}}{d\vphi'}\bigg)J\xi,J\xi\bigg\ra,\end{equation*}
$\vphi'$ is the normal semi-finite faithful weight on $\LL'$ given by $\vphi'(x')=\vphi(Jx'^*J)$, for $x'\in\LL'^+$, and $J$ is the anti-linear isometry associated to the standard representation of $\LL$.

To get a sense of what these spatial derivatives look like, let $G$ be a locally compact group with left Haar measure $\mu_G$, viewed as a weight on $L^{\infty}(G)$ via integration. Then for $f\in L^2(G)$ with $\norm{f}_2=1$, $d\om_f/d\mu_G$ is the (possibly unbounded) operator of multiplication by $|f|^2$ on $L^2(G)$, and
\begin{equation*}
H(\om_f,\mu_G) \,=\, -\bigg\la\log\bigg(\frac{d\om_f}{d\mu_G}\bigg)\,f \,,\, f\bigg\ra \,=\, -\int_G\abs{f(s)}^2\log\abs{f(s)}^2d\mu_G(s)\,
\end{equation*}
when $f\in\mc{D}(M_{|f|^2})$. In particular, if $G$ is compact, $-H(\om_f,\mu_G)$ is the Kullback--Leibler divergence of the probability density $|f|^2$, and $\mu_G$.

On the dual side, let $\hat{\vphi}$ and $\hat{\vphi}'$ denote the Plancherel weight on $VN(G)$ and $\RG=VN(G)'$, respectively (cf. \cite{Tak2}). The spatial derivative $d\om_{Jf}/d\hat{\vphi}'$ is then related to the Fourier transform of $f$, given by $\F(f)\xi=f\ast\Delta^{1/2}\xi$, $\xi\in\mc{D}(\F(f))$ (cf. \cite{T}). Indeed, for $f\in C_c(G)$ one can show that $|\F(f)|^2=d\om_{Jf}/d\hat{\vphi}'$, in which case if $\norm{f}_2=1$, we have
\begin{equation*}
 H(\om_f,\hat{\vphi})= H(\om_{Jf},\hat{\vphi}')=-\left\la\log\big(|\F(f)|^2\big)Jf,Jf\right\ra.\end{equation*}

Of course, if $\G$ were a unimodular locally compact quantum group and $\om\in\LO$ were a state with density $D$, then $H(\om,\vphi)$ defined above coincides with $H(D,\vphi)$ from \S1 (see \cite[\S2]{T} for details).

We shall now present a partial generalization of Theorem \ref{theorem} to the case of vector states on locally compact groups. For this, recall that the right regular representation of a locally compact group $G$ is defined by $\rho(g)\xi(s)=\xi(sg)\Delta(g)^{1/2}$, for $g,s\in G$ and $\xi\in L^2(G)$, and also integrates to a non-degenerate $\ast$-representation of $L^1(G)$ in the usual manner.

\begin{thm} Let $G$ be a locally compact group with left Haar measure $\mu_G$, let $\vphi$ be the Plancherel weight on $VN(G)$, and let $\xi\in L^2(G)$ with $\norm{\xi}_2=1$. If $H(\om_\xi,\mu_G)$ and $ H(\om_{\mc{F}\xi},\hat{\vphi})$ are finite, then
\begin{equation}\label{Delta} H(\om_\xi,\mu_G) +  H(\om_{\mc{F}\xi},\hat{\vphi}) \geq -\log\norm{\Delta^{-1/2}\xi}_2^2,\end{equation}
where for $\xi\notin\mc{D}(\Delta^{-1/2})$ we let $\norm{\Delta^{-1/2}\xi}_2=\infty$.\end{thm}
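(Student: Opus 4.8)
The plan is to mimic the proof of Theorem~\ref{theorem}, but now working on the group von Neumann algebra side and carefully tracking the modular function $\Delta$, which will account for the extra term $-\log\norm{\Delta^{-1/2}\xi}_2^2$. Concretely, for $\G=\G_s=(VN(G),\Gam_s,\vphi_s,\psi_s)$ and a vector state $\om_\xi$ on $L^\infty(\hat\G_s)=L^\infty(G)$, first I would reduce to the case where the relevant spatial derivatives are bounded by a truncation argument (replacing $|\xi|^2$ by $\chi_{[0,n]}(|\xi|^2)\cdot|\xi|^2$, suitably normalized, and $|\F(\xi)|^2$ likewise), exactly as in the ``general case'' step of Theorem~\ref{theorem}; monotone convergence then recovers the general inequality. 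So the core of the argument is the bounded case.

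In the bounded case, set $D$ to be the operator of multiplication by $|\xi|^2$ on $L^2(G)$ and $\hat D = |\F(\xi)|^2 = d\om_{J\xi}/d\hat\vphi'$, where $\F(\xi)\eta = \xi * \Delta^{1/2}\eta$. Following Frank--Lieb \cite{FL} as in Theorem~\ref{theorem}, I would form $A = -\log D - \F^*(\log\hat D)\F$ (or its natural analog using the spatial derivatives), so that $\tr(\rho A) = H(\om_\xi,\mu_G)+H(\om_{\F\xi},\hat\vphi)$ for $\rho = \om_\xi$; by Lemma~\ref{Gibbs} it suffices to show $\tr(e^{-A})\leq \norm{\Delta^{-1/2}\xi}_2^2$. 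Apply the Golden--Thompson inequality (Lemma~\ref{GT}) to bound $\tr(e^{-A})\leq \tr(D^{1/2}\F^*\hat D\F D^{1/2})$, then expand in orthonormal bases adapted to the Plancherel weight on $VN(G)$ and to $L^2(G)$, just as in the displayed chain of equalities in the proof of Theorem~\ref{theorem}. The key computation will rewrite this trace as $\hat\vphi\big((\vphi_D\ten\id)W^*(1\ten\hat D)W\big)$ (with the appropriate fundamental unitary $W\in VN(G)\bar\otimes L^\infty(G)$), and then a version of the Kraus/averaging argument of Lemma~\ref{Kraus} collapses it to $\vphi(D)\hat\vphi(\hat D)$ --- except that $G$ need not be unimodular, so Lemma~\ref{Kraus} does not apply verbatim. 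This is where the modular function enters.

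The main obstacle, therefore, is that the second Kraus identity $\sum_k \hat w_k\hat w_k^* = \om(1)1$ in Lemma~\ref{Kraus} fails in the non-unimodular case: resolving the identity with the basis $(Je_k)$ now produces operators twisted by $\Delta$ rather than $\hat w_k^*$. What survives is the first identity $\sum_k \hat w_k^*\hat w_k = \vphi(D)1$ together with a computation of $\sum_k \hat w_k\hat w_k^*$ in terms of $\Delta^{-1}$; tracing through the Tomita--Takesaki structure of $VN(G)$, the relevant expression is $(\vphi_D\ten\id)W^*(1\ten\hat D)W$ acting on $\hat D$-weighted averages, and applying $\hat\vphi$ will yield $\vphi(D)\cdot\la\Delta^{-1}\hat D^{1/2}(\cdot),\hat D^{1/2}(\cdot)\ra$-type terms, ultimately summing to $\norm{\Delta^{-1/2}\xi}_2^2$ instead of $1$. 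Concretely I expect the chain of equalities to terminate at $\norm{(1\ten\hat D^{1/2})W(D^{1/2}\ten 1)}_2^2 = \hat\vphi\big((\vphi_D\ten\id)W^*(1\ten\hat D)W\big)$ and then, using $\F(\xi)\eta = \xi*\Delta^{1/2}\eta$ explicitly, to evaluate this as $\int_G |\xi(s)|^2 \Delta(s)^{-1}\, d\mu_G(s) = \norm{\Delta^{-1/2}\xi}_2^2$. Once $\tr(e^{-A})\leq\norm{\Delta^{-1/2}\xi}_2^2$ is established, Lemma~\ref{Gibbs} applied with $\rho=\om_\xi$ (a rank-one projection, so $H(\rho,\tr)=0$) gives $H(\om_\xi,\mu_G)+H(\om_{\F\xi},\hat\vphi)\geq -\log\tr(e^{-A})\geq -\log\norm{\Delta^{-1/2}\xi}_2^2$, and the truncation limit passes to the general statement. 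The delicate points to get right will be the precise domains of the spatial derivatives and of $\Delta^{\pm1/2}$, and justifying the basis expansions and interchanges of sums via normality, exactly the technical issues already handled in the proof of Theorem~\ref{theorem}.
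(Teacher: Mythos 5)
Your overall skeleton matches the paper's: truncate, apply Golden--Thompson (Lemma \ref{GT}) and the Gibbs variational principle (Lemma \ref{Gibbs}) to the pure state $\om_\xi$ (so $H(\rho,\tr)=0$), pass to the limit by monotonicity, and let the modular function supply the constant; you also correctly anticipate that everything reduces to the bound $\tr(e^{-A_n})\le\norm{\Delta^{-1/2}\xi_n}_2^2$. The gap is in how you propose to prove that bound. You plan to replicate the chain of equalities from Theorem \ref{theorem} --- rewriting the trace as $\hat\vphi\big((\vphi_D\ten\id)W^*(1\ten\hat D)W\big)$ and then running a ``modular-twisted'' Kraus argument in place of Lemma \ref{Kraus} --- but that chain depends essentially on traciality of both Haar weights: the pairing $(\vphi_{D^{1/2}e_i}\ten\hat{\vphi}_{(\hat D^{1/2}\hat e_j)^*})(W)$, the rearrangement $(\vphi\ten\hat\vphi)(W(D^{1/2}e_i\ten\hat e_j^*\hat D^{1/2}))=(\vphi\ten\hat\vphi)((1\ten\hat D^{1/2})W(D^{1/2}\ten1)(e_i\ten\hat e_j^*))$, and the final collapse $\hat\vphi(\hat w_k^*\hat D\hat w_k)=\hat\vphi(\hat D\hat w_k\hat w_k^*)$ all use the trace property. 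You acknowledge that Lemma \ref{Kraus} fails verbatim, but then assert that the $\Delta$-twisted sums ``ultimately sum to $\norm{\Delta^{-1/2}\xi}_2^2$''; that assertion is the entire content of the theorem and is nowhere argued. It is not at all clear that inserting powers of $\Delta$ repairs each of those identities.

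The paper avoids this machinery. After Golden--Thompson it writes $\tr(e^{-A_n})\le\sum_i\la x_n\F(\xi_ne_i),\F(\xi_ne_i)\ra$ for an orthonormal basis $(e_i)$ of nonnegative compactly supported continuous functions, dominates $x_n$ by the full spatial derivative $d\om_{\F\xi}/d\tilde\vphi$, and then invokes the \emph{defining property} of the spatial derivative: $\la(d\om_\xi/d\vphi)\,\eta,\eta\ra=\om_\xi\big(R^\vphi(\eta)R^\vphi(\eta)^*\big)$ for $\vphi$-bounded $\eta$. For $\eta=\alpha(\Lm_\vphi(\lm(\xi_ne_i)))$ one has $R^\vphi(\eta)=\rho(\Delta^{-1/2}\check{\xi_n}\check{e_i})\circ\alpha$, and the elementary bound $\norm{\rho(g)}\le\norm{g}_1$ gives $\om_\xi(R^\vphi(\eta)R^\vphi(\eta)^*)\le\norm{\Delta^{-1/2}\check{\xi_n}\check{e_i}}_1^2=|\la\Delta^{-1/2}\xi_n,e_i\ra|^2$, which sums over $i$ to $\norm{\Delta^{-1/2}\xi_n}_2^2$. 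So $\Delta^{-1/2}$ enters through the formula for right-bounded vectors and the right regular representation, not through a Kraus decomposition, and the crucial step is an inequality rather than the exact evaluation you expect. To complete your version you would need to actually carry out the modular corrections to each Plancherel-type identity, which is a substantial piece of work the proposal does not contain; as written, the proof is incomplete at its central step.
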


\begin{proof} Throughout the proof we view $VN(G)$ as a subalgebra of $\bd( L^2(G))$, and we distinguish between the various Hilbert space representations of $VN(G)$. In particular, the Fourier transform defined above is a unitary isomorphism $\mc{F}: L^2(G)\rightarrow L^2(VN(G),\vphi')$, where $ L^2(VN(G),\vphi')$ is the spatial non-commutative $ L^2$-space in the sense of Hilsum \cite{H}. The latter space is also unitarily equivalent to $H_\vphi$ via $\beta(\lm(f)\Delta^{1/2})=\Lm_{\vphi}(\lm(f))$ for $\lm(f)\in\N_\vphi$ \cite[Theorem 23]{T2}, which in turn is unitarily equivalent to $ L^2(G)$ via $\alpha(\Lm_{\vphi}(\lm(f)))=f$ for $\lm(f)\in\N_\vphi$. In all, $\beta\circ\mc{F}\circ\alpha=\id_{H_\vphi}$ (cf. \cite{Coo}). We let $\tilde{\vphi}:=\beta^*\alpha^*\cdot\vphi\cdot\alpha\beta$ be the conjugate weight of $\vphi$.

Now, for $n\in\mathbb{N}$, we let $\xi_n(s)=|\xi(s)|$ when $|\xi(s)|^2\leq n$ and $\xi_n(s)=0$ otherwise, so that $M^2_{\xi_n}=\chi_{[0,n]}(M_{|\xi|^2})$, and we let $x_n=\chi_{[0,n]}(d\om_{\mc{F}\xi}/d\tilde{\vphi})$. Then
\begin{equation*}-\la\log(M^2_{\xi_n})\xi,\xi\ra - \la\log(x_n)\mc{F}\xi,\mc{F}\xi\ra=\tr(|\xi\ra\la\xi| A_n),\end{equation*}
where $A_n:=-\log(M^2_{\xi_n})-\mc{F}^*\log(x_n)\mc{F}$. Next, Lemma \ref{GT} yields
\begin{equation*}\tr(e^{-A_n})\leq\tr(M_{\xi_n}\mc{F}^*x_n\mc{F}M_{\xi_n})=\sum_{i\in I}\la x_n\mc{F}(\xi_ne_i),\mc{F}(\xi_ne_i)\ra,\end{equation*}
where $(e_i)_{i\in I}$ is an orthonormal basis of $ L^2(G)$ consisting of non-negative continuous functions with compact support. Since $\xi_ne_i$ is a bounded, compactly supported function in $ L^1(G)$, the vector $\alpha(\Lm_{\vphi}(\lm(\xi_ne_i)))\in L^2(G)$ is $\vphi$-bounded, and it follows that $R^{\vphi}(\alpha(\Lm_{\vphi}(\lm(\xi_ne_i))))=\rho(\Delta^{-1/2}\check{\xi_n}\check{e_i})\circ\alpha$, where $\check{\xi_n}\check{e_i}(s)=\xi_ne_i(s^{-1})$ for $s\in G$. Thus, by properties of the spatial derivative we get
\begin{align*}\bigg\la x_n\mc{F}(\xi_ne_i),\mc{F}(\xi_ne_i)\bigg\ra&\leq\bigg\la\beta\frac{d\om_{\mc{F}\xi}}{d\tilde{\vphi}}\beta^*\Lm_{\vphi}(\lm(\xi_ne_i)), \Lm_{\vphi}(\lm(\xi_ne_i))\bigg\ra\\                             &=\bigg\la\frac{d\om_{\xi}}{d\vphi}\alpha(\Lm_{\vphi}(\lm(\xi_ne_i))),\alpha(\Lm_{\vphi}(\lm(\xi_ne_i)))\bigg\ra\\
&=\om_\xi(R^{\vphi}(\alpha(\Lm_{\vphi}(\lm(\xi_ne_i))))R^{\vphi}(\alpha(\Lm_{\vphi}(\lm(\xi_ne_i))))^*)\\
&=\bigg\la\rho(\Delta^{-1/2}\check{\xi_n}\check{e_i})\rho(\Delta^{-1/2}\check{\xi_n}\check{e_i})^*\xi,\xi\bigg\ra\leq\norm{\Delta^{-1/2}\check{\xi_n}\check{e_i}}_1^2.\end{align*}
But
\begin{equation*}\norm{\Delta^{-1/2}\check{\xi_n}\check{e_i}}_1=\int_G\Delta(s)^{-1/2}\xi_n(s^{-1})e_i(s^{-1})ds=\int_G\Delta(s)^{-1/2}\xi_n(s)e_i(s)ds=\la\Delta^{-1/2}\xi_n,e_i\ra,\end{equation*}
and since this is true for arbitrary $i\in I$, we have
\begin{equation*}\tr(e^{-A_n})\leq\sum_{i\in I}|\la\Delta^{-1/2}\xi_n,e_i\ra|^2=\norm{\Delta^{-1/2}\xi_n}_2^2<\infty.\end{equation*}
Thus, applying Lemma \ref{Gibbs} to the pure state $\om_\xi$, we obtain
\begin{equation*}-\la\log(M^2_{\xi_n})\xi,\xi\ra - \la\log(x_n)\mc{F}\xi,\mc{F}\xi\ra\geq -\log\norm{\Delta^{-1/2}\xi_n}_2^2.\end{equation*}
Finally, since $\Delta^{-1/2}\xi_n$ increases pointwise to $\Delta^{-1/2}|\xi|$, by monotonicity we have
\begin{equation*}H(\om_\xi,\mu_G) +  H(\om_{\mc{F}\xi},\hat{\vphi})=\lim_{n\rightarrow\infty}\bigg(-\la\log(M^2_{\xi_n})\xi,\xi\ra - \la\log(x_n)\mc{F}\xi,\mc{F}\xi\ra\bigg)\geq-\log\norm{\Delta^{-1/2}\xi}_2^2.\end{equation*}
\end{proof}

In the theory of random walks on discrete groups, entropic quantities play a significant role in describing the probabilistic behavior (cf. \cite{Kai-Ver}). With the emergence of non-commutative random walks on discrete quantum groups \cite{I}, it is natural to ask whether entropic quantities can be used to study the
corresponding dynamics. This has been done, for example, in \cite{HI}, where amenability of fusion algebras was studied via entropies of random walks generated
by ``$q$-traces''. We now show that our entropy reduces to the entropy of \cite{HI} when restricted to $q$-traces.

Let $\G$ be a discrete quantum group, i.e., the dual $\hat{\G}$ is compact. In this case we may identify
\begin{equation*}
\Ll \,\cong\, \bigoplus_{\alpha\in I} \, M_{d_\alpha}(\C)\,,
\end{equation*}
where the direct sum is taken over all irreducible unitary co-representations of $\hat{\G}$. In the case of compact groups, $VN(G)$ becomes a discrete quantum
group and the above decomposition is that arising from the Peter--Weyl theorem.

For every $\alpha\in I$ there exists a positive invertible matrix
$F^\alpha\in M_{d_\alpha}(\C)$ such that the corresponding ``$F$--matrices'' implement the left Haar weight $\vphi$ in the sense that
\begin{equation*}
\vphi(x)\,=\,\sum_{\alpha\in I}\,\tr(F^\alpha)\,\tr(F^\alpha x),\hs\hs x\in\Mphi\,.
\end{equation*}
Given a state $\om\in\Lo$, let $\sum_{\alpha\in I}D_\alpha\in\Ll$ denote the density of $\om$ via trace duality. It then follows that
$D_\om:=\sum_{\alpha\in I}\tr(F^\alpha)^{-1}(F^\alpha)^{-1}D_\alpha$ is the density of $\om$ with respect to the Haar weight, i.e.,  $\om(x)=\vphi(D_\om x)$ for $x\in\Ll$. If we restrict our attention to so-called ``$q$-traces'' of the form
$\mu=\sum_{\alpha}\mu_\alpha\delta_\alpha$, where $\delta_\alpha(\cdot)=\tr(F^\alpha)^{-1}\tr(F^\alpha(\cdot))$ and $(\mu_\alpha)\in\ell^1(I)$ is a probability measure, then $D_\mu=\sum_{\alpha}\tr(F^\alpha)^{-2}\mu_\alpha z_\alpha$, where $z_\alpha\in\mc{Z}(\Ll)$ is the central projection corresponding to the factor $M_{d_\alpha}$. We now show that for such states $\mu$, our entropy $H(\mu,\vphi)$ coincides with $H_\sigma(\mu)$ as defined in \cite[\S2]{HI}.

\begin{thm}\label{last} Let $\G$ be a discrete quantum group, and $\mu=\sum_{\alpha}\mu_\alpha\delta_\alpha$ for some probability distribution $(\mu_\alpha)\in\ell^1(I)$. Then if $H(\mu,\vphi)$ is finite we have
\begin{equation*}H(\mu,\vphi)=-\sum_{\alpha\in I}\mu_\alpha\log\bigg(\frac{\mu_\alpha}{\tr(F^\alpha)^2}\bigg)=H_\sigma(\mu).\end{equation*}\end{thm}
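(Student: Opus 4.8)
The plan is to identify $H(\mu,\vphi)$ with the negative relative entropy $-S(\mu,\vphi)$ — as already recorded in the text — and then to compute $S(\mu,\vphi)$ by exploiting that, since $\hat\G$ is compact, the von Neumann algebra $\Ll\cong\bigoplus_{\alpha\in I}M_{d_\alpha}(\C)$ is \emph{semi-finite}, with canonical unnormalized trace $\tau=\bigoplus_{\alpha\in I}\tr_\alpha$. The first step is to write down the densities of the two weights with respect to $\tau$: from $\vphi(x)=\sum_{\alpha}\tr(F^\alpha)\tr(F^\alpha x)$ one reads off $h_\vphi=\bigoplus_{\alpha}\tr(F^\alpha)F^\alpha$, and from $\mu=\sum_\alpha\mu_\alpha\delta_\alpha$ with $\delta_\alpha(\cdot)=\tr(F^\alpha)^{-1}\tr(F^\alpha(\cdot))$ one reads off $h_\mu=\bigoplus_\alpha\mu_\alpha\tr(F^\alpha)^{-1}F^\alpha$; note that $\tau(h_\mu)=\sum_\alpha\mu_\alpha=1$ and that $h_\mu=D_\mu h_\vphi$ with $D_\mu=\sum_\alpha\tr(F^\alpha)^{-2}\mu_\alpha z_\alpha$ the Haar density from the preceding paragraph, consistent with $\mu(x)=\tau(h_\mu x)=\vphi(D_\mu x)$.

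The second step is the computation. I would invoke the semi-finite formula for the relative entropy (cf. \cite[\S5]{OP}): for a normal state $\om$ and a normal semi-finite faithful weight $\psi$ on a semi-finite von Neumann algebra with trace $\tau$, and with $\tau$-densities $h_\om,h_\psi$, one has $S(\om,\psi)=\tau(h_\om\log h_\om)-\tau(h_\om\log h_\psi)$, the difference being finite under the standing hypothesis that $H(\mu,\vphi)$ is finite. The key simplification is that $h_\mu$ and $h_\vphi$ are block-diagonal and, within each block $\alpha$, are \emph{positive scalar multiples of the same matrix} $F^\alpha$ (equivalently, $h_\mu=D_\mu h_\vphi$ with $D_\mu$ central); hence they commute and
\[
\log h_\mu-\log h_\vphi \,=\, \bigoplus_{\alpha\in I}\big(\log\mu_\alpha-2\log\tr(F^\alpha)\big)\,z_\alpha \,=\, \bigoplus_{\alpha\in I}\log\!\Big(\tfrac{\mu_\alpha}{\tr(F^\alpha)^2}\Big)\,z_\alpha.
\]
Applying $\tau(h_\mu\,\cdot\,)$ and using $\tr_\alpha(h_\mu z_\alpha)=\mu_\alpha\tr(F^\alpha)^{-1}\tr(F^\alpha)=\mu_\alpha$ gives $S(\mu,\vphi)=\sum_{\alpha}\mu_\alpha\log(\mu_\alpha/\tr(F^\alpha)^2)$, so that $H(\mu,\vphi)=-S(\mu,\vphi)$ is the asserted quantity; the final equality $H(\mu,\vphi)=H_\sigma(\mu)$ is then immediate from the definition of the random-walk entropy in \cite[\S2]{HI}, with $\tr(F^\alpha)$ playing the role of the dimension function there.

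I expect the only genuine difficulty to be a point of justification rather than calculation: reconciling the spatial-derivative definition of $H(\mu,\vphi)$ with the trace-density formula for $S(\mu,\vphi)$ when $\vphi$ is an \emph{unbounded} weight, so that $h_\vphi$ and $\log h_\vphi$ are unbounded and the sum defining $S(\mu,\vphi)$ need not converge absolutely. Since the identity $H(\om,\vphi)=-S(\om,\vphi)$ is already granted, the substantive issue is the semi-finite formula for relative entropy against a weight; I would either cite it directly from \cite{OP}, or, for a self-contained argument, truncate by the finite central subalgebras $\bigoplus_{\alpha\in F}M_{d_\alpha}(\C)$ with $F\subseteq I$ finite — on which $\vphi$ restricts to a bounded functional — and pass to the limit via monotonicity together with the assumed finiteness of $H(\mu,\vphi)$, in the spirit of the truncation step in the proof of \thmref{theorem}.
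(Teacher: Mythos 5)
Your proposal is correct, and the computation of the densities $h_\vphi=\bigoplus_\alpha\tr(F^\alpha)F^\alpha$ and $h_\mu=\bigoplus_\alpha\mu_\alpha\tr(F^\alpha)^{-1}F^\alpha$, together with the block-wise proportionality that makes $\log h_\mu-\log h_\vphi$ central, is exactly the right structural observation. But your route differs from the paper's. You compute $S(\mu,\vphi)$ directly from the Umegaki-type density formula $\tau(h_\mu\log h_\mu)-\tau(h_\mu\log h_\vphi)$ on the semi-finite algebra $\Ll$, whereas the paper never invokes that formula for a weight: it instead uses the normal conditional expectation $E$ onto the center $\mc{Z}(\Ll)$, the invariance $\vphi_F\circ E=\vphi_F$, and the Ohya--Petz additivity $S(\mu,\vphi_F)=S(\mu|_{\mc{Z}(\Ll)},\vphi_F|_{\mc{Z}(\Ll)})+S(\mu,\mu\circ E)$, with the second term vanishing precisely because $D_\mu$ is central, so that the whole computation collapses to a classical relative entropy on the commutative center. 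The payoff of the paper's route is that it sidesteps the one point you yourself flag as the ``only genuine difficulty'': justifying the trace-density formula for relative entropy against an unbounded weight. Both arguments must still truncate to finite central supports $z_F$ and pass to the limit; the paper does this via the orthogonal-sum decomposition $\frac{d\mu}{d\vphi'}=\frac{d\mu_F}{d\vphi'}\oplus\frac{d\mu_{F^c}}{d\vphi'}$ of spatial derivatives, and your limit is unproblematic since each term $-\mu_\alpha\log(\mu_\alpha/\tr(F^\alpha)^2)$ is non-negative. If you flesh out the truncation as you propose (on $z_F\Ll$ the weight $\vphi_F$ is bounded and the density formula is standard), your argument is complete and arguably more self-contained computationally, at the cost of re-deriving on each block what the conditional-expectation identity gives in one stroke.
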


\begin{proof} Clearly $\mu=\om_{\Lm_{\vphi}(D_\mu^{1/2})}|_{\Ll}$, and $J\Lm_{\vphi}(D_\mu^{1/2})=\Lm_{\vphi}(D_\mu^{1/2})$ as $D_\mu\in\mc{Z}(\Ll)$. Thus, $H(\mu,\vphi)=H(\om_{\Lm_{\vphi}(D_\mu^{1/2})},\vphi)=H(\om_{\Lm_{\vphi}(D_\mu^{1/2})}',\vphi')$, so we shall use the spatial derivative with respect to the Haar weight on the commutant given by $\vphi'(x')=\vphi(Jx'^*J)$, for $x'\in\Ll'^+$.

We first consider the case of finitely supported $\mu$, say $\mu=\sum_{\alpha\in F}\mu_\alpha\delta_\alpha$, with $|F|<\infty$. If $z_F=\sum_{\alpha\in F}z_\alpha$ is the central projection of $\ell^\infty(\G)$ corresponding to the support of $D_\mu$, it readily follows from the definition of the spatial derivative that
\begin{equation*}\frac{d\mu}{d\vphi'}=\frac{d\mu}{d\vphi_F'},\end{equation*}
where $\vphi_F(x)=\vphi(z_Fx)$ for $x\in\ell^{\infty}(\G)$ and $\vphi_F'(x')=\vphi_F(Jx'^*J)$ for $x'\in\ell^{\infty}(\G)'$. If $E:\ell^{\infty}(\G)\rightarrow\mc{Z}(\ell^{\infty}(\G))$ denotes the normal faithful conditional expectation $E(x)=\sum_{\alpha\in I}\tr(F^\alpha)^{-1}\tr(F^\alpha x)z_\alpha$, $x\in\ell^{\infty}(\G)$, we have $\vphi_F\circ E=\vphi_F$. Thus, by the unnormalized version of \cite[Theorem 5.15]{OP} (cf. \cite[Proposition 5.1]{OP}) we have
\begin{equation*}H(\mu,\vphi)=-S(\mu,\vphi)=-S(\mu,\vphi_F)=-S(\mu|_{\mc{Z}(\ell^{\infty}(\G))},\vphi_F|_{\mc{Z}(\ell^{\infty}(\G))})-S(\mu,\mu\circ E).\end{equation*}
However, since the density of $\mu$ is central, $\mu\circ E=\mu$, implying $S(\mu,\mu\circ E)=0$ (cf. \cite[pg. 16]{OP}). Since $\mc{Z}(\ell^{\infty}(\G))$ is commutative, we obtain
\begin{equation*}H(\mu,\vphi)=-\sum_{\alpha\in F}\mu_\alpha\log\bigg(\frac{\mu_\alpha}{\tr(F^\alpha)^2}\bigg).\end{equation*}

Next, let $\mu$ be arbitrary, and put $\xi:=\Lm_\vphi(D_\mu^{1/2})$, which lies in $\mc{D}(\log(d\mu/d\vphi'))$ by hypothesis. For a finite subset $F\subseteq I$, write $\mu_F:=\sum_{\alpha\in F}\mu_\alpha\delta_\alpha$ and $\mu_{F^c}:=\sum_{\alpha\in F^c}\mu_\alpha\delta_\alpha$. Note that $\mu_F$ and $\mu_{F^c}$ have orthogonal support, so by properties of the spatial derivative (cf. \cite[Corollary 12]{C}) we have
\begin{equation*}\frac{d\mu}{d\vphi'}=\frac{d\mu_F}{d\vphi'}\oplus\frac{d\mu_{F^c}}{d\vphi'}.\end{equation*}
By functional calculus, the above decomposition also holds for the corresponding logarithms, i.e., $\log\bigg(\frac{d\mu}{d\vphi'}\bigg)=\log\bigg(\frac{d\mu_F}{d\vphi'}\bigg)\oplus\log\bigg(\frac{d\mu_{F^c}}{d\vphi'}\bigg)$. Therefore, $z_F\log\bigg(\frac{d\mu}{d\vphi'}\bigg)\subseteq\log\bigg(\frac{d\mu}{d\vphi'}\bigg)z_F$, so that
\begin{equation*}\bigg\la\log\bigg(\frac{d\mu_F}{d\vphi'}\bigg)\xi,\xi\bigg\ra=\bigg\la\log\bigg(\frac{d\mu_F}{d\vphi'}\bigg)z_F\xi,\xi\bigg\ra=\bigg\la\log\bigg(\frac{d\mu}{d\vphi'}\bigg)z_F\xi,\xi\bigg\ra=\bigg\la\log\bigg(\frac{d\mu}{d\vphi'}\bigg)\xi,z_F\xi\bigg\ra.\end{equation*}
Putting things together,
\begin{equation*}H(\mu,\vphi)=-\lim_{F}\bigg\la\log\bigg(\frac{d\mu_F}{d\vphi'}\bigg)\xi,\xi\bigg\ra=-\lim_{F}\sum_{\alpha\in F}\mu_\alpha\log\bigg(\frac{\mu_\alpha}{\tr(F^\alpha)^2}\bigg)=H_\sigma(\mu).\end{equation*}
\end{proof}

We therefore see that restricted to $q$-traces, our entropy coincides with that of Hiai--Izumi, suggesting that this is the appropriate concept to
extend the entropy theory of random walks on discrete quantum groups beyond $q$-traces. A natural question is then how this entropy behaves under
quantum group convolution. Furthermore, to what extent does the uncertainty principle generalize to non-unimodular quantum groups? These questions will be pursued elsewhere.

\section*{Acknowledgements}

The authors would like to thank the referee for many valuable comments, especially concerning a simpler proof of Theorem \ref{last}.

\end{spacing}

\vspace{0.2in}

\end{document}